\newcommand{\discuss}[1]{\textbf{\color{magenta} (Discussion: #1)}} 
\newcommand{\discuss}[1]{}
\theoremstyle{plain}
\newtheorem{defn}{Definition}
\newtheorem{prop}{Proposition}
\newtheorem{assum}{Assumption}
\newtheorem{lemma}{Lemma}
\newtheorem{remark}{Remark}
\begin{document}
\title{Energy efficient D2D communications in dynamic TDD systems}

\author{\IEEEauthorblockN{Demia Della Penda, Liqun Fu, and Mikael Johansson}\\
\thanks{All the authors are with the Department of Automatic Control, KTH-Royal Institute of Technology, Sweden (E-mails: \{demiadp, liqun,  mikaelj\}@kth.se).}
}


\maketitle

\vspace{-1.5cm}
\begin{abstract}
\vspace{-0.2cm}

Network-assisted device-to-device communication
is a promising technology for improving the performance
of proximity-based services. 
This paper demonstrates how the integration of device-to-device communications and dynamic time-division duplex can improve the energy efficiency of future cellular networks, leading to a greener system operation and a prolonged battery lifetime of mobile devices.
We jointly optimize the mode selection, transmission period and power allocation to minimize the energy consumption (from both a system and a device perspective) while satisfying a certain rate requirement. The radio resource management problems are formulated as mixed-integer nonlinear programming problems.
Although they are known to be NP-hard in general, we exploit  the problem structure to design efficient algorithms that optimally solve several problem cases. For the remaining cases, a heuristic algorithm that computes near-optimal solutions while respecting practical constraints on execution times and signaling overhead is also proposed.
Simulation results confirm that the combination of device-to-device and flexible time-division-duplex technologies can significantly enhance spectrum- and energy-efficiency of next generation cellular systems.
\end{abstract}

\vspace{-0.2cm}
\begin{IEEEkeywords}
\vspace{-0.2cm}
D2D communication, energy-efficient network, dynamic TDD, mode selection, power control. 
\end{IEEEkeywords}

\IEEEpeerreviewmaketitle

\section{Introduction}
During the last decade, wireless communications have experienced an explosive growth, both in the number of mobile subscribers and in the data traffic demands.
This phenomenon is expected to continue in the future~\cite{cisco2015}, mainly driven by an increasingly rich web content, video- and audio-streaming and file sharing.
Meeting this increasing demand causes the energy consumption of wireless systems to escalate. The growing energy bills of operators, limited battery lifetime of mobile devices and environmental concerns are progressively steering the research community towards the development of energy-efficient wireless communications~\cite{EE2013}.
Efficiency and scalability are therefore becoming the key criteria for the development of the next generation (5G) systems, where Device-to-Device communication (D2D) is recognized as one of its promising technology components~\cite{5G2014, Tehrani2014}.

A natural question in the context of D2D communication is under which condition two users should communicate through a direct link rather than via the BS. This problem is known as the \emph{mode selection problem}. The optimal mode selection depends on the performance measure that we wish to optimize. 
For example, the authors in~\cite{Doppler2010} select the communication mode to maximize user rate while satisfying SINR constraints on active cellular links. 
In~\cite{Hakola2010}, the authors focus on maximizing the power-efficiency of the network. Maximizing the quality-of-service (QoS) for a given transmit power (and the dual problem of minimizing the power for a given QoS) was considered in~\cite{Shalmashi2014}.
 To realize the full potential of D2D communications, the mode selection should be done jointly with the radio resource allocation. 
Several works have investigated this joint problem~\cite{Jung2012, FodorEurasip2012, Belleschi2014}, mainly developing mixed-integer programming models and dealing with NP-hard formulations. Hence, heuristics~\cite{Belleschi2011, Yu2014,Gao2014} and game-theoretic approaches~\cite{Wu2014} have also been explored as more practical alternatives.

Typically, state of the art literature deals with minimizing the transmission power consumption or maximizing the overall system throughput. 
In this paper, we focus on the energy consumption.
Specifically, we analyze the energy reduction that can be obtained by combining D2D communication with dynamic Time-Division Duplex (TDD) technology, which is considered an attractive duplexing method for 5G networks \cite{Lähetkangas2014, venkatasubramanian2014performance}.
In dynamic TDD systems the BS can adjust the time allocated to uplink (UL) and downlink (DL) traffic dynamically.
Traditionally, this feature has been used to compensate for the asymmetry in the UL  and DL  traffic demands. 
In contrast, we use this degree of freedom to minimize the energy consumption of the transmissions, leveraging on the observation
that the energy required for sending a fixed amount of data decreases with the transmission duration (see~\cite{ElGamal2002} and references therein).

To the best of our knowledge, the energy efficiency improvements that can be obtained by integrating D2D and flexible TDD has not been investigated in literature.
Frameworks for D2D enhanced TDD networks are proposed in 
\cite{venkatasubramanian2015_D2D, Sun2014, Li2011}.
However, they do not account for the  mode selection and they mainly focus on the adaptive UL/DL slot allocation to D2D pairs, so to balance the traffic load, coordinate the interference,  improve coverage probability and  sum-rate.

This paper extends our previous study~\cite{PendaICC2015} to a more realistic multi-link scenario. 
We consider two possible spectrum allocation strategies: one where each D2D pair is assigned its own frequency channel, and one where all D2D connections share the same channel. 
We develop algorithms that minimize the energy consumption from the perspective of both the entire system and of the mobile devices, since these are the most energy-sensitive part of the network.

Our main contributions can be summarized as follows:
\begin{itemize}
\item We formulate the joint mode selection and resource allocation problems as a mixed-integer nonlinear programs (MINLP), which are NP-hard to solve, in general.
\item For the interference-free case, we propose a low-complexity algorithm that exploits problem structure and finds the optimal solution in polynomial (and in some cases linear) time.
\item When considering interference, finding the optimal solution becomes much more challenging. However, we design an algorithm based on branch and bound (B\&B), which computes the optimal solution in a much more efficient way than a naive exhaustive search. 
\item We propose a heuristic algorithm for computing near-optimal solutions while respecting practical constraints in terms of execution times and signalling overhead.
\end{itemize}

We structure the paper as follows.
\S~\ref{Sec: model} describes the system model and states some basic assumptions. \S~\ref{Sec: Problem Formulation} presents the general problem formulation. \S~\ref{Sec: NoInterf} elaborates the optimal solution for the interference-free case, while \S~\ref{Sec: Interf} presents both the B\&B and the heuristic approach for minimizing total device energy when all D2D communications share a single frequency channel. Numerical results are presented and discussed in \S~\ref{Sec: results}. Finally, \S~\ref{Sec: conclusion} concludes the paper.

\section{System Model and Assumption} \label{Sec: model}
We consider a single-cell network where infrastructure-assisted device-to-device communication is enabled.
Communication between in-cell users is done in one of two possible modes:
\begin{enumerate}
\item  \textit{Cellular mode}: the transmitter first sends the data to the BS, which then forwards the message to the intended receiver (Fig.~\ref{sub1_a});
\item  \textit{D2D mode}: a dedicated direct link between the transmitter and receiver is set up (Fig.~\ref{sub1_b}).
\end{enumerate}

In the cell, we consider a set ${\mathcal L}$ of user pairs that wish to communicate. Each user pair constitutes a logical link that we label by an integer $1, 2, \dots L$. The BS is denoted by $0$ and we refer to the users in pair-$l$ as transmitter-$l$ (Tx-$l$) and receiver-$l$ (Rx-$l$), respectively.

\begin{figure}[ht]
     \begin{center}
        \subfigure[cellular mode]{
             \includegraphics[width=0.2\hsize]{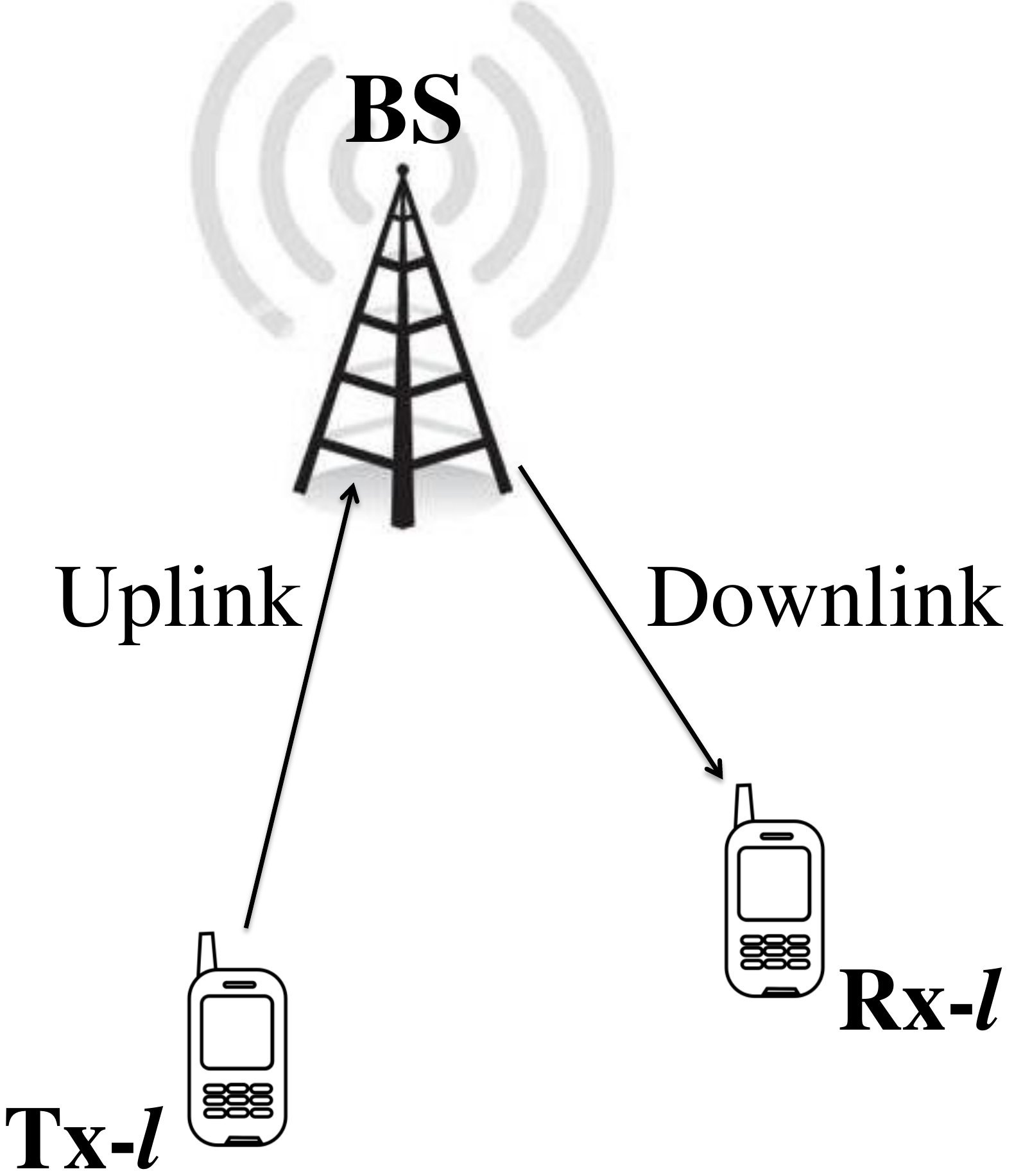}\label{sub1_a}
        }
        \subfigure[D2D mode]{
            \includegraphics[width=0.2\hsize]{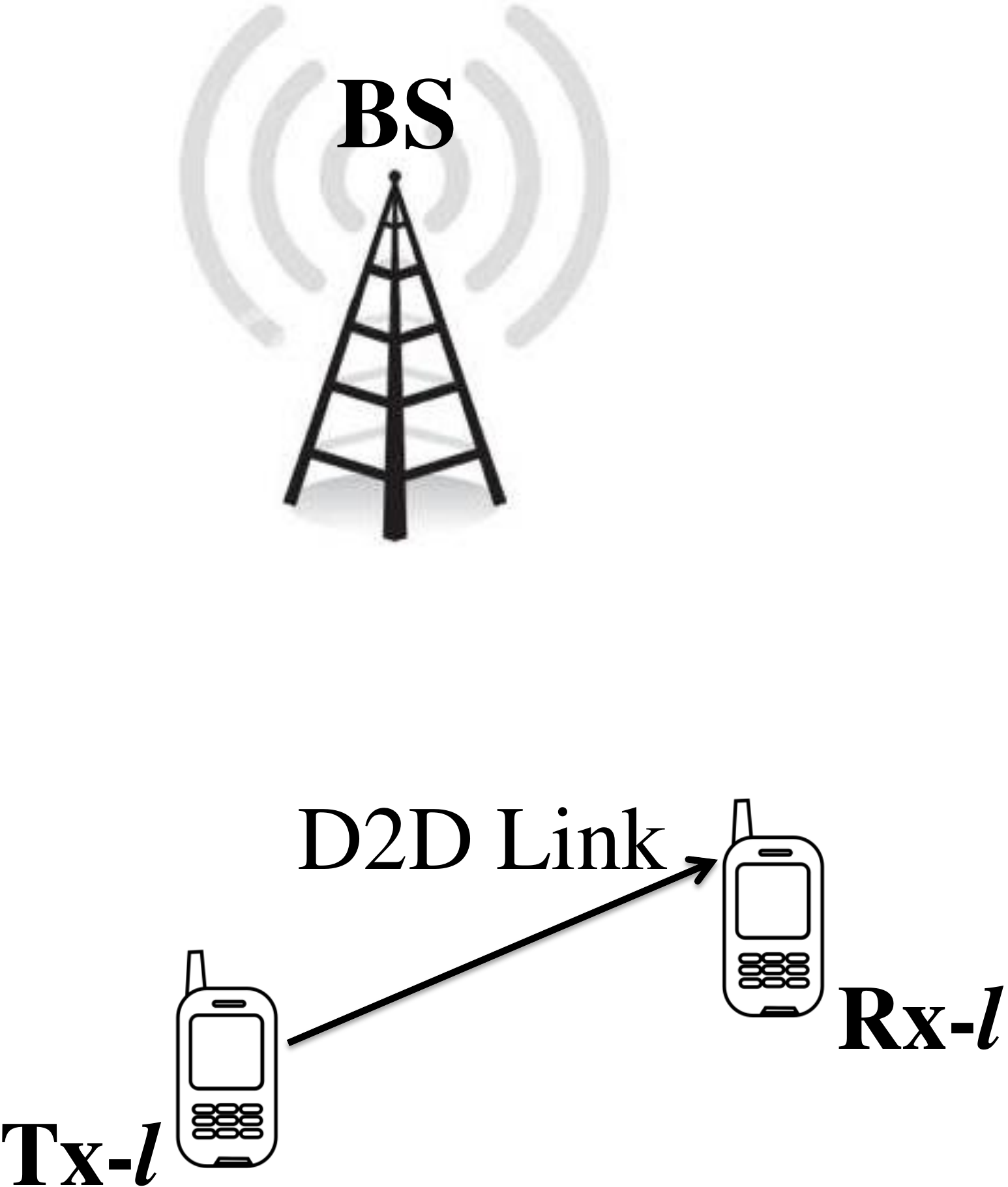}\label{sub1_b}
        }
    \end{center}
\caption{Communication modes: user pair-$l$ can communicate either via the BS (a) or through a direct link (b).}\label{sub1}
\end{figure}

The system bandwidth is divided into orthogonal channels of size $W$ Hz, and time is divided into frames of fixed length of $T$ seconds, see Fig.~\ref{sub2}. The BS manages the spectrum by assigning a time-frequency physical resource block to each logical link.
Each resource block consists of one frame in the time domain, and one channel in the frequency domain.

\begin{figure} [h!]
\begin{center}
\includegraphics[scale=0.35]{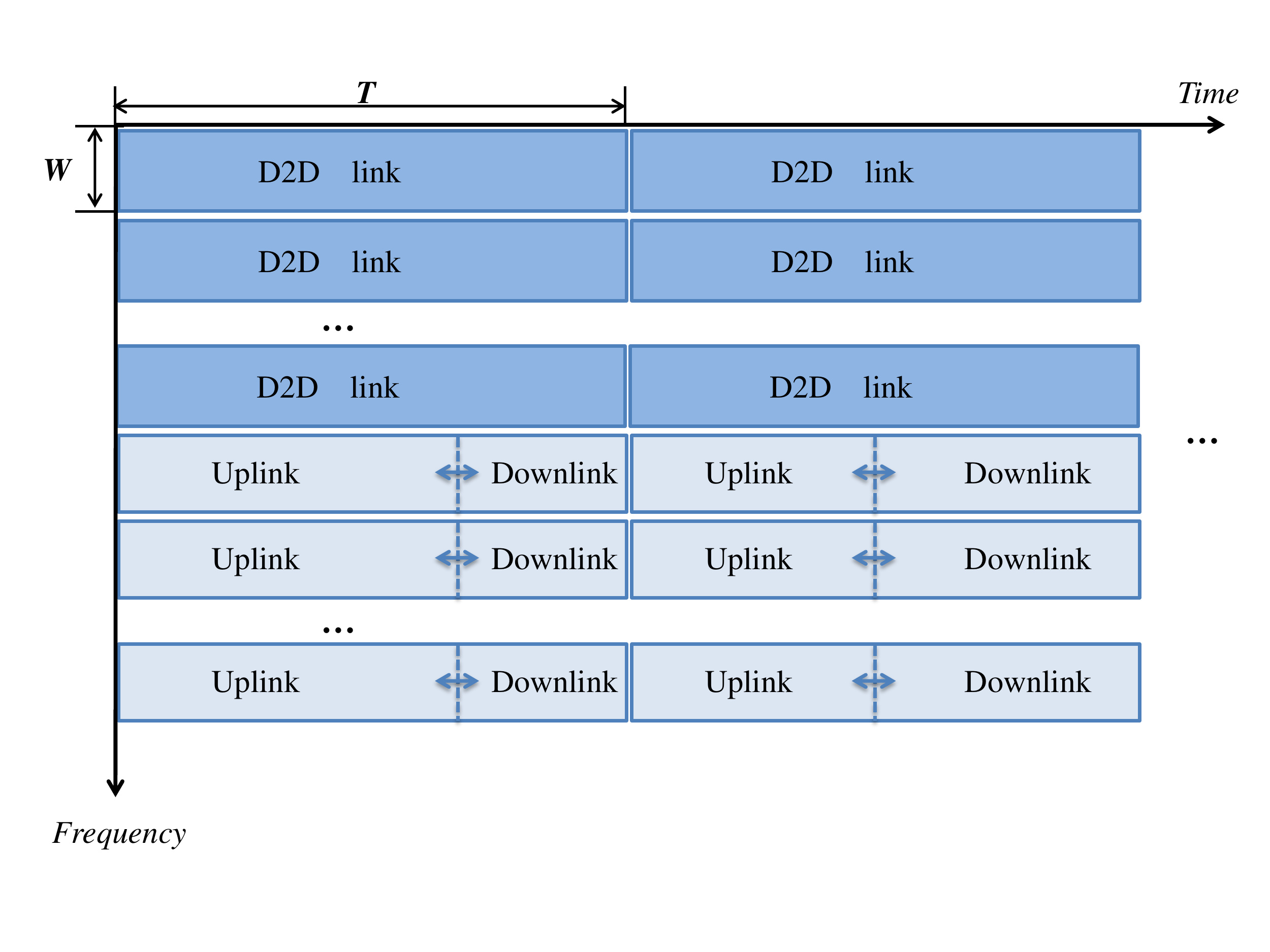}
\end{center}
\caption{Frequency-time resources configuration for D2D communications with dynamic TDD scheme. D2D links can use the full frame duration $T$; UL and DL durations for cellular communications can be reconfigured at each time frame.}\label{sub2}
\end{figure}

We consider \textit{overlay in-band} D2D communication, where D2D communications are allowed to use cellular resources but may not cause interference to traditional cellular communications.
To this end, we assume that the channels are divided into two non-overlapping sets: one allocated to the pairs communicating in cellular mode and one to those communicating in D2D mode.

\subsection{Communication in cellular mode}
To prevent intra-cell interference between concurrent transmissions, the BS follows the channel allocation policy of legacy LTE systems and assigns a separate channel to each user pair in cellular mode.
{We assume that the system adopts a dynamic TDD mode, where the UL and DL transmissions for a user pair occur on the same frequency channel but alternate in time (see Fig.~\ref{sub2}).
In particular, we consider a dynamic TDD system, where the time allocation for UL and DL transmissions can be reconfigured in each time frame, but it is the same for all communications.
This synchronization of cellular communications is a standard practice in current TDD systems to avoid strong inter-cell interference~\cite{Shen2012, TDD_Wang2009}. We denote by $t_{\rm ul}$ and $t_{\rm dl}$ the portion of the time frame allocated to the UL and DL transmissions, respectively.

Let $p_{l0}$ and $p_{0l}$ be the transmit power levels used by Tx-$l$ in the uplink to the BS, and used by the BS in downlink to Rx-$l$, which are subject to bounds $p_{l}^{\rm max}$ and $p_{0}^{\rm max}$, respectively. The instantaneous rates $r_{l0}$ and $r_{0l}$ achieved in UL and DL, respectively, are assumed to follow Shannon's capacity formula:
\begin{equation} \label{eq: rate}
r_{l0} = W \log\left(1 + \frac{p_{l0} G_{l0}}{\sigma^2}\right), \quad r_{0l} = W \log\left(1 + \frac{p_{0l} G_{0l}}{\sigma^2}\right).
\end{equation}
Here, $G_{l0}$ is the channel gain between Tx-$l$ and the BS, $G_{0l}$ is the channel gain between the BS and Rx-$l$, and $\sigma^2$ is the noise power. The maximum instantaneous rates (corresponding to maximum power transmissions) in~\eqref{eq: rate} are denoted by $r_{l0}^{\rm max}$ and $r_{0l}^{\rm max}$, respectively.

\subsection{Communication in D2D mode} \label{IntroD2D}
In D2D mode, each pair can use the full frame duration for its \emph{single-hop} (\cite{Fodor2012}) transmission, as illustrated in Fig.~\ref{sub2}. Let $t_{l}\leq T$ denote the active time of pair-$l$ in D2D mode. As discussed above, the D2D communication occurs on different orthogonal frequency channels than those used for cellular communications.
We consider two  channel allocation strategies for D2D pairs:
\begin{itemize}
\item \textbf{Full Orthogonality (FO)}:
All D2D communications are assigned orthogonal frequency channels. Hence, no receiver is interfered by other transmissions within the cell.
\item \textbf{D2D Resource Sharing (RS)}: All D2D communications are assigned to the same frequency channel, hence interfere with each other.
\end{itemize}

Let $r_{ll}$ denote the instantaneous rate between users of pair-$l$ when transmitting in D2D mode:
\begin{equation} \label{eq: rated2d}
r_{ll}  = W \log\left(1 + \frac{p_{ll} G_{ll}}{\sigma^2 + I_{l}}\right),
\end{equation}
Here, $G_{ll}$ is the direct channel gain between Tx-$l$ and its intended receiver,  $p_{ll}$ is the transmit power level (upper bounded by $p_{l}^{\rm max}$), and $I_l=\sum_{j\neq l}p_{jj}G_{jl}$ is the interference power level experienced at Rx-$l$, due to other concurrent D2D transmissions. Note that $I_l=0$ if we consider the FO channel allocation scheme. We use $r_{ll}^{\rm max}$ to denote the maximum instantaneous achievable rate corresponding to $p_{l}^{\rm max}$.

\subsection{Rate constraint, power feasibility and energy cost} \label{IntroEnergy}
To guarantee a certain QoS, each pair-$l$ has a traffic requirement of  $b_l$ nats
during a time frame, irrespectively of the communication mode. This QoS requirement can be translated into a session rate requirement of $ b_l/T$ nats per second.
Specifically, if pair-$l$ is in cellular mode, the transmission times for UL and DL, along with the corresponding instantaneous transmission rates, must satisfy
\begin{equation} \label{eq: rateConstraint}
r_{l0}t_{\rm ul} \geq  b_l \quad \text{and} \quad r_{0l}t_{\rm dl} \geq  b_l,
\end{equation}
Similarly, if pair-$l$ is in D2D mode, $t_{l}$ and $r_{ll}$ must satisfy
\begin{equation} \label{eq: rateConstraint2}
r_{ll}t_{l} \geq  b_l.
\end{equation}

The limitation on the transmission power levels, together with the session rate requirements above, entail the need to verify under which conditions the communication of a pair can be supported by the network.
To this end, we introduce the concept of \textit{power-feasibility}:

\begin{defn}[Power feasibility]\label{def: powerFeasibility}
We say that user pair-$l$ is \emph{power-feasible} 
\begin{itemize}
\item[(a)] in D2D mode if  $r_{ll}^{\rm max}T \geq b_l$;
\item[(b)] in cellular mode if there exists a time allocation $(t_{\rm ul}, t_{\rm dl})$ such that
\begin{align}
\begin{cases}
t_{\rm ul} + t_{\rm dl} &\leq T, \\
t_{\rm ul} &\geq \frac{b_l}{r_{l0}^{\rm max}}, \\
t_{\rm dl} &\geq \frac{b_l}{r_{0l}^{\rm max}}.
\end{cases} \label{eqn:cellular_feasibility}
\end{align}
\end{itemize}
\end{defn}

\begin{assum}\label{Ass1}
Every user pair-$l$ in the cell can be supported at least in cellular mode.
\end{assum}

\begin{remark}
The power-feasibility condition (\ref{eqn:cellular_feasibility}) implies that $t_{\rm ul}$ must satisfy
\begin{equation} \label{PowerFeasible}
\frac{b_l}{r_{l0}^{\rm max}} \leq t_{\rm ul} \leq T -\frac{b_l}{r_{0l}^{\rm max}}.
\end{equation}
Thus, Assumption~\ref{def: powerFeasibility} of a feasible UL time allocation for all pairs suggests the following feasibility condition for the session rate requirements:
$\underset{l}{\max} \{ \frac{b_l}{r_{l0}^{\rm max}}\}\leq \underset{l}{\min} \{T -\frac{b_l}{r_{0l}^{\rm max}}\}$.
\end{remark}

By inverting the power-rate relationships~\eqref{eq: rate} and~\eqref{eq: rated2d}, we find the energy required to satisfy the session rate requirement $b_l$ on link-$l$ for a given time allocation
\begin{align}
E_{l0}(t_{\rm ul}) &= p_{l0}t_{\rm ul} = \left( \exp\left(\frac{b_l}{Wt_{\rm ul}}\right) -1 \right)  \frac{\sigma^2}{G_{l0}}t_{\rm ul}, && \text{UL, cellular mode}; \nonumber\\
E_{0l}(t_{\rm dl}) &= p_{0l}t_{\rm dl} = \left( \exp\left(\frac{b_l}{Wt_{\rm dl}}\right) -1 \right)  \frac{\sigma^2}{G_{0l}}t_{\rm dl}, && \text{DL, cellular mode}; \label{eq: energy}\\
E_{ll}^{\rm D2D}(t_l, I_l) &= p_{ll}t_{l} = \left( \exp\left(\frac{b_l}{Wt_{l}}\right) -1 \right)  \frac{\sigma^2 + I_l}{G_{ll}}t_{l}, && \text{D2D mode} \nonumber.
\end{align}
These functions are convex and monotonically decreasing~\cite{ElGamal2002} in their arguments. This observation leads to the following result:
\begin{lemma} \label{lemma: Energy cost D2D}
Any energy-optimal solution must allocate the full frame duration for communication. For the D2D mode, this implies that
\begin{align*}
	\min_{t_l, I_l} E_{ll}^{\rm D2D}(t_l, I_l) = \min_{I_l} E_{ll}^{\rm D2D}(T, I_l),
\end{align*}
while for cellular communication it must hold that
\begin{align*}
	t_{\rm ul}+t_{\rm dl} &= T.
\end{align*}
\end{lemma}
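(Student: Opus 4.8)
The plan is to exploit the monotonicity that has just been granted: each of the energy functions in~\eqref{eq: energy} is (strictly) decreasing in its transmission-time argument. Intuitively, spreading the fixed requirement $b_l$ over a longer window lowers the instantaneous rate that must be sustained, and hence the required power and energy. Both claims of the lemma then follow by pushing the transmission time to the largest value the constraints permit.

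First I would treat the D2D mode. Fix the interference level $I_l \geq 0$. By~\eqref{eq: energy}, the map $t_l \mapsto E_{ll}^{\rm D2D}(t_l, I_l)$ equals $(\sigma^2 + I_l)/G_{ll}$ times $\bigl(\exp(b_l/(Wt_l)) - 1\bigr)\,t_l$ and is therefore monotonically decreasing on $(0, T]$. Its minimizer over the feasible set $\{t_l : t_l \leq T\}$ is thus attained at the boundary $t_l = T$; enlarging $t_l$ only relaxes the rate constraint~\eqref{eq: rateConstraint2}, so feasibility is preserved, and indeed power feasibility in D2D (Definition~\ref{def: powerFeasibility}(a)) is precisely the condition that $t_l = T$ be feasible. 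Hence $\min_{t_l \leq T} E_{ll}^{\rm D2D}(t_l, I_l) = E_{ll}^{\rm D2D}(T, I_l)$ for every $I_l$, and minimizing both sides over $I_l$ yields the first identity.

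For the cellular mode I would argue by contradiction. Suppose an energy-optimal allocation $(t_{\rm ul}^\star, t_{\rm dl}^\star)$ satisfied $t_{\rm ul}^\star + t_{\rm dl}^\star < T$, leaving slack $\delta = T - t_{\rm ul}^\star - t_{\rm dl}^\star > 0$. Replacing $t_{\rm ul}^\star$ by $t_{\rm ul}^\star + \delta$ keeps the pair within the time budget and still satisfies both rate constraints in~\eqref{eq: rateConstraint}, since a longer UL window can only help. As $E_{l0}$ is strictly decreasing, this strictly lowers the UL energy while leaving the DL energy unchanged, contradicting optimality. Therefore every optimal cellular allocation must exhaust the frame, i.e. $t_{\rm ul} + t_{\rm dl} = T$.

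I expect no serious obstacle here, since the monotonicity is already available; the one point requiring care is \emph{strict} monotonicity, which is what turns the exchange argument into a genuine contradiction rather than merely guaranteeing the existence of an optimum meeting the budget with equality. This is easily confirmed from the closed forms in~\eqref{eq: energy}: writing $f(t) = \bigl(\exp(c/t) - 1\bigr)\,t$ with $c = b_l/W > 0$ and substituting $u = c/t > 0$, one finds $f'(t) = e^{u}(1 - u) - 1$, whose value at $u = 0$ is $0$ and whose derivative in $u$ equals $-u\,e^{u} < 0$; thus $f'(t) < 0$ for all $t > 0$, and the same holds for $E_{l0}$, $E_{0l}$, and $E_{ll}^{\rm D2D}(\cdot, I_l)$ up to positive multiplicative constants.
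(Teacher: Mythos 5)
Your proof is correct and takes essentially the same route as the paper, which states the lemma as a direct consequence of the observation (cited from~\cite{ElGamal2002}) that the energy functions in~\eqref{eq: energy} are monotonically decreasing in their time arguments, so that any slack in the frame can be reallocated to strictly reduce energy. Your explicit verification of strict monotonicity via $f'(t)=e^{u}(1-u)-1<0$ and the exchange argument for the cellular case merely fill in details the paper delegates to the citation.
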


The energy cost for communicating in cellular mode includes the energy cost of both the transmitting device and that of the BS. However, the BS often has access to cheap and abundant energy in comparison with the user equipment, in which case it is relevant to only focus on the device energy. To this end, we consider the following two definitions of energy consumption for a generic user pair-$l$ in cellular mode:
\begin{itemize}
\item The \textbf{System Energy consumption (\textbf{SE})}:
is the energy consumed by both Tx-$l$ in UL and the BS in DL. By Lemma~\ref{lemma: Energy cost D2D}, the total energy cost is obtained by minimizing
\begin{equation} \label{SE-MS}
E_{ll}^{\rm CELL}(t_{\rm ul}) =  E_{l0}(t_{\rm ul}) + E_{0l}(T-t_{\rm ul}),
\end{equation}
which is a convex function of $t_{\rm ul}$.

\item The \textbf{User Energy consumption (\textbf{UE})}:
is the energy consumed by Tx-$l$ in UL transmission, disregarding the energy spent by the BS, that is
\begin{equation} \label{UE-MS}
E_{ll}^{\rm CELL}(t_{\rm ul}) =  E_{l0}(t_{\rm ul}),
\end{equation}
which is a convex and monotonically decreasing function in $t_{\rm ul}$.
\end{itemize}

\section{Problem Statement} \label{Sec: Problem Formulation}
We consider the problem of jointly optimizing the communication mode, the UL/DL transmission times and the powers allocated to all transmitters, to minimize the energy consumption of the system.
For communications in D2D mode, we consider the two possible spectrum sharing strategies introduced in \S~\ref{IntroD2D}: FO and RS. The energy cost for cellular communications is either the user energy (UE), or the total system energy (SE) presented in \S~\ref{IntroEnergy}. Hence, we consider the following four variations of the energy minimization problem:
\begin{itemize}
\item Fully orthogonal channels - system energy minimization (\textbf{FO-SE}),
\item Fully orthogonal channels - user energy minimization (\textbf{FO-UE}),
\item D2D resource sharing - system energy minimization (\textbf{RS-SE}),
\item D2D resource sharing - user energy minimization (\textbf{RS-UE}).
\end{itemize}
In each case, we develop algorithms that find the optimal mode selection and resource allocation that minimize the chosen energy cost. Specifically, the mode selection policy divides the set of user pairs ${\mathcal L}$ into two subsets: ${\mathcal D}$, representing the pairs that should communicate in D2D mode, and ${\mathcal C}={\mathcal L\setminus \mathcal D}$, representing the user pairs that should communicate in cellular mode. The mode selection is jointly optimized with the allocation of the optimal transmission power to each transmitter, and the optimal UL/DL time allocation for all user pairs in $\mathcal{C}$.

As will be seen in the subsequent sections, the optimal joint mode selection and resource allocation problem for each of these four cases can be formulated as a nonlinear mixed-integer mathematical program.
In  \S~\ref{Sec: NoInterf}, we show that optimal solution for FO-SE and FO-UE can be found in polynomial time, even though the overall problem is not convex.
In \S~\ref{Sec: Interf}, on the other hand, we will show that finding the optimal solution in the RS-SE and RS-UE cases is challenging, mainly due to the large amount of runtime needed and knowledge of all cross-link gains required. Thus, we further propose a heuristic algorithm that efficiently computes near-optimal solutions and takes into account practical implementation aspects.

\section{Minimum-Energy Mode Selection with Full Orthogonality}  \label{Sec: NoInterf}
In this section, we show how the jointly optimal mode selection and resource allocation with full orthogonality can be found in polynomial time. For ease of exposition, we first derive the optimal solution for a single user pair, and then extend the results to multiple user pairs.

\subsection{Single user pair}
We first characterize the minimum energy cost for cellular and D2D communication. 
We show that the optimal system energy (SE) cost can be found by solving a simple convex optimization problem, while the minimal user energy (UE) cost admits an explicit expression. 
\subsubsection{Minimum energy cost for communication in cellular mode}
In cellular mode, the UL/DL time allocation is chosen to minimize one of the following two objectives:

\begin{itemize}

\item Minimizing SE:
The minimum amount of energy of pair-$l$ in cellular mode can be determined by solving the following single-variable convex optimization problem:
\begin{subequations} \label{PB_ntwrk}
\begin{align}
    \underset{ t_{\rm ul}}{\text{minimize}}
        & \quad  E_{l0}(t_{\rm ul}) + E_{0l}(T - t_{\rm ul}) \\
    \text{subject to}
        & \quad  \frac{b_l}{r_{l0}^{\max}} \leq t_{\rm ul}  \leq T -  \frac{b_l}{r_{0l}^{\max}},\label{PB_ntwrk: C1}
\end{align}
\end{subequations}
where constraint~\eqref{PB_ntwrk: C1} ensures power  feasibility in the sense of Definition~\ref{def: powerFeasibility}.
Problem~\eqref{PB_ntwrk} can be solved efficiently using a wide variety of methods, including simple bisection search~\cite{Boyd2004}.
Let $t_{\rm ul}^\star$ denote the optimal solution to~\eqref{PB_ntwrk}. The energy cost of pair-$l$ in cellular mode is then
\begin{equation} \label{E_network}
E_{ll}^{\rm CELL}(t_{\rm ul}^\star) =  E_{l0}(t_{\rm ul}^\star) + E_{0l}(T - t_{\rm ul}^\star).
\end{equation}

\item Minimizing UE:
Here, the only difference from the problem formulation in~\eqref{PB_ntwrk} is that the objective function reduces to $E_{l0}(t_{\rm ul})$. By monotonicity of the objective function, the optimal solution is attained by $t_{\rm ul}^\star = T-\frac{b_l}{r_{0l}^{\max}} $, with the corresponding optimal energy cost
\begin{equation}\label{E_user}
E_{ll}^{\rm CELL}(t_{\rm ul}^\star)  =  E_{l0}(T-\frac{b_l}{r_{0l}^{\max}}).
\end{equation}
\end{itemize}

\subsubsection{Minimum energy for communication in D2D mode}
In the D2D mode, no traffic is forwarded through the BS and only the user equipment consumes energy for the connection. The minimum energy cost follows from Lemma~\ref{lemma: Energy cost D2D}, with $I_l=0$:
\begin{equation} \label{E_D2D_nonextended}
E_{ll}^{\rm D2D} =  \left( \exp\left(\frac{b_l}{WT}\right) -1 \right)  \frac{\sigma^2}{G_{ll}}T.
\end{equation}
Equation \eqref{E_D2D_nonextended} is only valid when the D2D mode is power feasible. Since we have ensured power feasibility only for communications in cellular mode, we need to verify that $r_{ll}^{\max}T\leq b_l$ before applying~\eqref{E_D2D_nonextended}. It turns out to be more convenient to work with the extended value function
\begin{equation} \label{E_D2D}
\bar{E}^{\rm D2D}_{ll} = \left\{
  \begin{array}{l l}
    \left( \exp\left(\frac{b_l}{WT}\right) -1 \right)  \frac{\sigma^2}{G_{ll}}T & \quad \text{ if } \, r_{ll}^{\max}T\leq b_l\\
   +\infty & \quad \text{otherwise.}
  \end{array} \right.
\end{equation}
\subsubsection{Optimal mode selection policy and resource allocation}

The optimal mode selection policy consists in first solving the convex optimization problem to estimate the energy cost for cellular mode, and then comparing it with the energy cost for D2D mode. In the FO-SE scenario, we compare \eqref{E_D2D} with \eqref{E_network}, and in the FO-UE scenario we compare \eqref{E_D2D} with \eqref{E_user}.
The optimal communication mode is simply the one that requires the least amount of energy. 

Once the optimal communication mode and transmission time has been selected, the corresponding optimal powers are easily derived as
\begin{align}
p_{l0}   &= \left( \exp\left(\frac{b_l}{Wt_{\rm ul}^\star}\right) -1 \right)  \frac{\sigma^2}{G_{l0}}, && \text{UL, cellular mode}; \nonumber\\
p_{0l}   &= \left( \exp\left(\frac{b_l}{W(T-t_{\rm ul}^\star)}\right) -1 \right)  \frac{\sigma^2}{G_{0l}}, && \text{DL, cellular mode}; \label{eq: opt powers}\\
{p_{ll}} &= \left( \exp\left(\frac{b_l}{WT}\right) -1 \right)  \frac{\sigma^2}{G_{ll}}, && \text{D2D mode} \nonumber.
\end{align}

It is possible to interpret the mode selection policy in terms of the channel gain (and, so, the physical distance) between the two communicating devices. We will explore this geometrical interpretation in \S~\ref{Sec: results} to characterize regions in the cell where D2D communication is preferable.

\subsection{Multiple user pairs }
Under full orthogonality, the main challenge with multiple user pairs is that all communications in cellular mode must use one common UL/DL time allocation. In other words, the optimal UL/DL time allocation must account for the energy consumption of all users.
Under full orthogonality, D2D links do not interfere with each
  other. Thus, the minimum energy cost for each pair in
  D2D mode is the same constant value~\eqref{E_D2D} as in the single user pair case.
The energy consumption in cellular mode, on the other hand,
depends on the UL time allocation.
Let $\mathbf{m}\in \{0,1\}^L$ denote the \textit{mode selection vector} whose entries satisfy
\begin{equation} \label{m}
 m_l = \left\{
  \begin{array}{l l}
    0 & \quad \text{if pair-$l$ is in cellular mode},\\
    1 & \quad \text{if pair-$l$ is in
     D2D mode.}
  \end{array} \right.
\end{equation}
The general formulation of the joint mode selection and time allocation problem for multiple pairs under both FO-SE and FO-UE cases, is the following MINLP:
\begin{subequations} \label{PB_ML}
\begin{align}
   \underset{\mathbf{m}, t_{\rm ul} }{\text{minimize}}
        & \quad  \sum_{l=1}^L  \bar{E}^{\rm D2D}_{ll} m_l + E_{ll}^{\rm CELL}(t_{\rm ul}) (1-m_l)\\
    \text{subject to}
        &\quad  \frac{b_l}{r_{l0}^{\max}} - T m_l \leq t_{\rm ul} \leq T -\frac{b_l}{r_{0l}^{\max}} + T m_l, & \forall l, \label{C: PB_ML} \\
        &\quad t_{\rm ul}\in[0,T], \quad m_l\in \{0,1\}, & \forall l.
\end{align}
\end{subequations}
The objective function is the total energy consumption of all
  the $L$ user pairs, with $E_{ll}^{\rm CELL}(t_{\rm ul})$ given by \eqref{SE-MS} or \eqref{UE-MS} under FO-SE or FO-UE, respectively. 
  Constraints~\eqref{C: PB_ML} ensure that pair-$l$ is assigned to cellular mode, only if it is power feasible in the sense of Definition~\ref{def: powerFeasibility}.
Since the user pairs in D2D mode do not interfere with each other, it turns out to be convenient to eliminate the integer variables by incorporating the optimal and feasible mode selection solution in the single pair case. Thus, problem \eqref{PB_ML} can be recast as
\begin{equation} \label{PB_ML_2}
\begin{split}
    \underset{ t_{\rm ul} \in [0, T]}{\text{minimize}}
         & \quad  F(t_{\rm ul}).
\end{split}
\end{equation}
Here $ F(t_{\rm ul}) = \sum_{l=1}^L E_l(t_{\rm ul})$, and $E_l(t_{\rm ul})$ denotes the minimum energy-cost for the single pair-$l$ when the UL time is fixed to $t_{\rm ul}$, that is
\begin{equation} \label{E_l}
E_l(t_{\rm ul}) \triangleq \begin{cases}
\min \{\bar{E}^{\rm D2D}_{ll}, E_{ll}^{\rm CELL}(t_{\rm ul})\}  & \text{if } t_{\rm ul} \in [\frac{b_l}{r_{l0}^{\max}}, T -\frac{b_l}{r_{0l}^{\max}} ] \\
\bar{E}^{\rm D2D}_{ll}									  & \text{otherwise}.
\end{cases}
\end{equation}

Equation~\eqref{E_l} reveals the piecewise nature of $E_l(t_{\rm{ul}})$. For $t_{\rm{ul}}< b_l/r_{l0}^{\max}$ and  $t_{\rm{ul}}>T -\frac{b_l}{r_{0l}^{\max}}$,  we have $E_l(t_{\rm{ul}}) = \bar{E}_{ll}^{\rm D2D}$, which is a finite constant if user pair-$l$ is power feasible in D2D mode and $+\infty$ otherwise.
In the interval $[b_l/r_{l0}^{\max}, T-b_l/r_{0l}^{\max}]$, $E_l(t_{\rm{ul}})$ is equal to the constant $\bar{E}_{ll}^{D2D}$ or given by the function $E_{ll}^{\rm CELL}(t_{\rm{ul}})$, depending on whether and at which points the graphs of  $E_{ll}^{\rm CELL}(t_{\rm{ul}})$ and $\bar{E}_{ll}^{D2D}$ intersect. Note that the two graphs can intersect only once if $E_{ll}^{\rm CELL}(t_{\rm{ul}})$ is monotonically decreasing (UE) or twice if it is convex (SE). See Fig.~\ref{fig: Example} for an illustration.

To better describe the piecewise nature of $E_{l}(t_{\rm{ul}})$, we introduce $\Delta_l=[\tau_l^{\rm min},\tau_l^{\rm max}]$ as the interval of $t_{\rm ul} $ during which $E_l(t_{\rm ul}) = E_{ll}^{\rm CELL}(t_{\rm ul}) $. 
If, for a pair-$l$, such an interval does not exist (\emph{i.e.}, $\Delta_l=\emptyset$), D2D mode is always more energy efficient than cellular mode for that user pair.

\begin{figure}[h]
     \begin{center}
        \subfigure[FO-UE single link.]{
        \label{fig: Example_1}
            \includegraphics[scale=0.33]{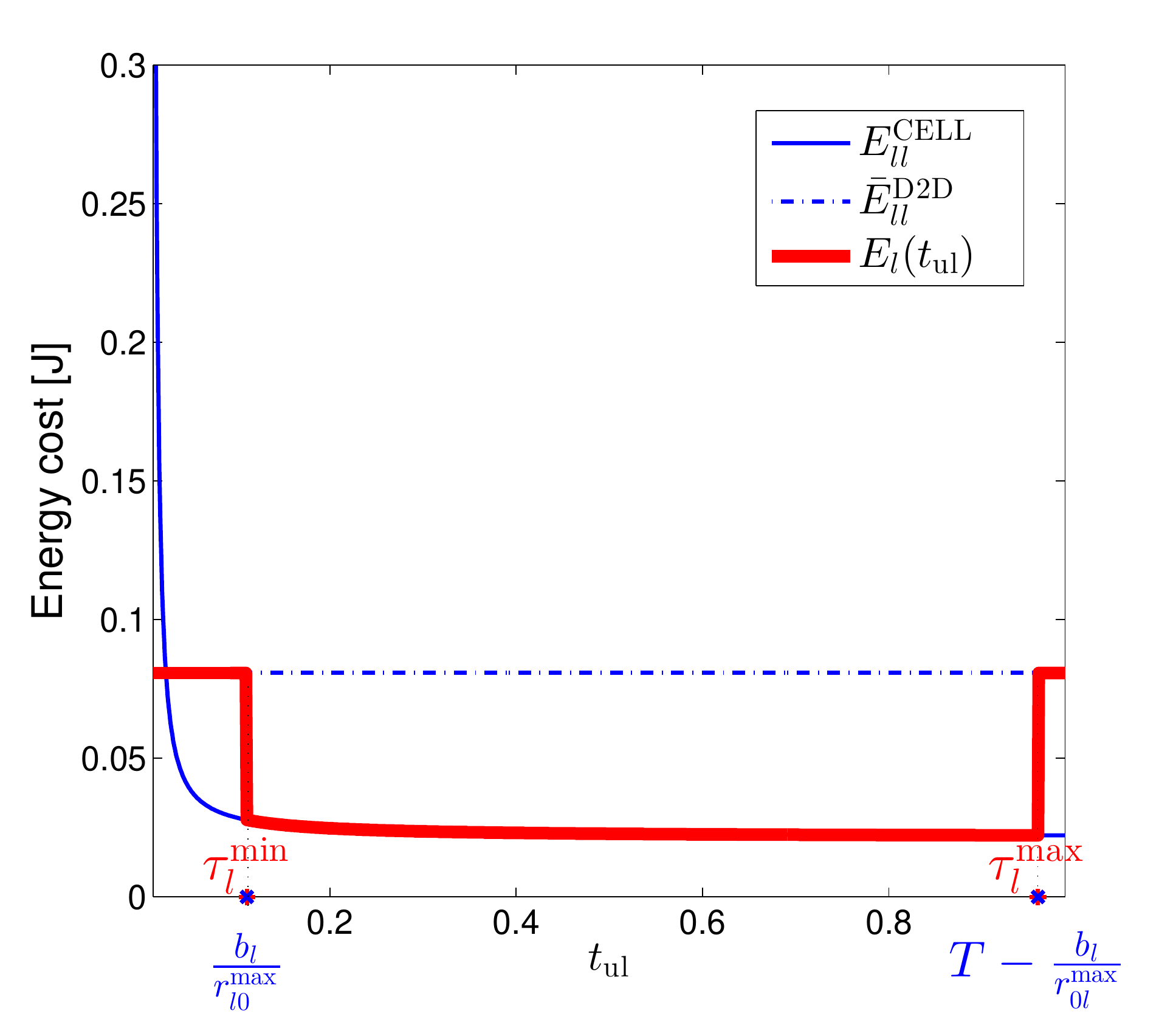}
        }
        \subfigure[FO-SE single link.]{
        \label{fig: Example_2}
           \includegraphics[scale=0.33]{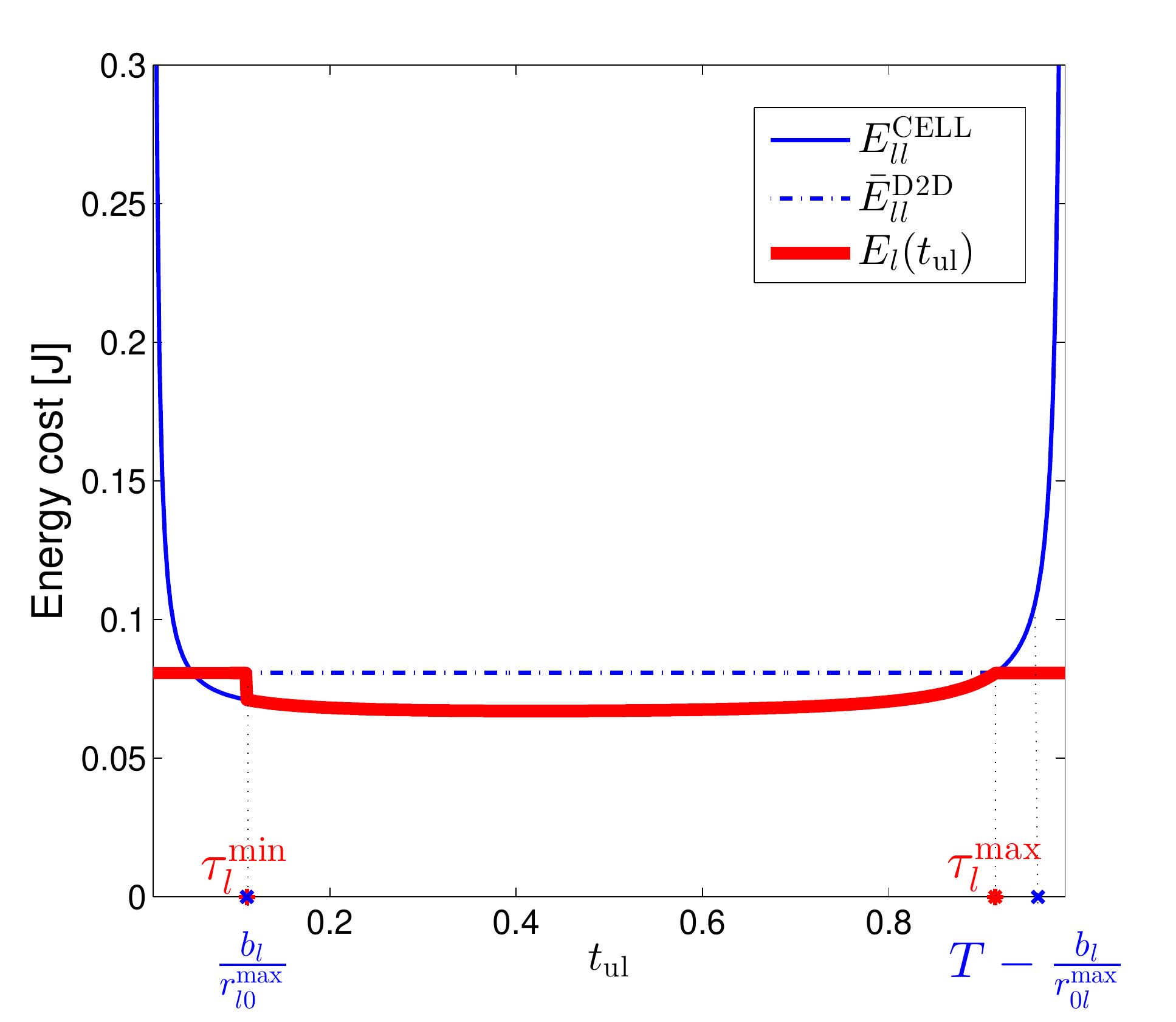}
        }
    \end{center}
\caption{Deriving $E_l(t_{\rm ul})$ under both the FO-UE case (a) and the FO-SE case (b). The interval $\Delta_l=[\tau_l^{\rm min},\tau_l^{\rm max}]$ depends on whether and at which points the graphs of  $E_{ll}^{\rm CELL}(t_{\rm{ul}})$ and $\bar{E}_{ll}^{D2D}$ intersect, but also on the interval $[\frac{b_l}{r_{l0}^{\max}}, T -\frac{b_l}{r_{0l}^{\max}} ]$ for power feasibility in cellular mode.
The two intervals coincide when the graphs intersect outside the power feasible interval (a).
 $\Delta_l \subset [\frac{b_l}{r_{l0}^{\max}}, T -\frac{b_l}{r_{0l}^{\max}} ]$ if at least one of the two possible intersection points is within the power feasible interval (b). $\Delta_l=\emptyset$ if the two graphs never intersect.}\label{fig: Example}
\end{figure}

Figures~\ref{User_ToyEx} and~\ref{Ntwrk_ToyEx} show the minimum energy-cost of each pair, obtained as illustrated in Fig.~\ref{fig: Example}, and the corresponding $F(t_{\rm ul})$ of a simple network with three user pairs, under FO-UE or FO-SE, respectively.
Note that function $F(t_{\rm ul})$ is non-convex on the interval  $[0,T]$. The following lemma establishes the key property of
  $F(t_{\rm ul})$, which will be useful later when solving the problem formulated in~\eqref{PB_ML_2}.

\begin{lemma} \label{lemma: function F}
\begin{enumerate}[a)]
\item[(a)] In the FO-SE scenario, $F(t_{\rm ul})$ is a piecewise convex function.
\item[(b)] In the FO-UE scenario, $F(t_{\rm ul})$ is a piecewise decreasing function.
\end{enumerate}
\end{lemma}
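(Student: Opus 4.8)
The plan is to reduce the claim about the sum $F$ to structural properties of each summand $E_l(t_{\rm ul})$, and then invoke the fact that convexity and monotonicity are both preserved under addition. First I would record the elementary observation, read directly off the definition~\eqref{E_l}, that each $E_l$ equals the constant $\bar{E}_{ll}^{\rm D2D}$ outside the interval $\Delta_l = [\tau_l^{\rm min}, \tau_l^{\rm max}]$ and equals $E_{ll}^{\rm CELL}(t_{\rm ul})$ on $\Delta_l$. Hence the only candidate breakpoints of $E_l$ are $\tau_l^{\rm min}$ and $\tau_l^{\rm max}$, and between them each $E_l$ is described by a single branch.

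Next I would confirm that $\Delta_l$ really is an interval, so that the $\min$ in~\eqref{E_l} resolves to a single branch on each piece rather than hiding additional switches. In the FO-SE case $E_{ll}^{\rm CELL}$ is convex, so its sublevel set $\{t_{\rm ul} : E_{ll}^{\rm CELL}(t_{\rm ul}) \leq \bar{E}_{ll}^{\rm D2D}\}$ is an interval; intersecting it with the power-feasible interval $[b_l/r_{l0}^{\max}, T - b_l/r_{0l}^{\max}]$ keeps it an interval (a convex graph meets the horizontal line at height $\bar{E}_{ll}^{\rm D2D}$ at most twice, which is why $\Delta_l$ may be strictly interior). In the FO-UE case $E_{ll}^{\rm CELL}$ is monotonically decreasing, so the same sublevel set is a right half-line whose intersection with the power-feasible interval is again an interval, with right endpoint $\tau_l^{\rm max} = T - b_l/r_{0l}^{\max}$.

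Then I would form a common partition of $[0,T]$ by collecting all breakpoints $\{\tau_l^{\rm min}, \tau_l^{\rm max}\}_{l=1}^L$ into a single ordered set of grid points. On each resulting subinterval, every summand $E_l$ is a single branch: either the constant $\bar{E}_{ll}^{\rm D2D}$ or the function $E_{ll}^{\rm CELL}(t_{\rm ul})$. For part (a), $E_{ll}^{\rm CELL}$ is convex and constants are convex, so $F$, being a finite sum of convex functions on that subinterval, is convex there; hence $F$ is piecewise convex. For part (b), $E_{ll}^{\rm CELL}$ is monotonically decreasing and constants are (weakly) decreasing, so $F$ is a finite sum of decreasing functions and therefore decreasing on each subinterval; hence $F$ is piecewise decreasing.

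The argument is short because the genuine work is already encoded in~\eqref{E_l}; the one point that needs care, and which I would flag as the main obstacle, is verifying that each $\Delta_l$ is a genuine interval, so that the pointwise minimum does not create an extra breakpoint in the interior of a piece. This is exactly where the convexity (SE) and monotonicity (UE) of $E_{ll}^{\rm CELL}$ enter, and it is also what distinguishes the two cases: at most two crossings with the horizontal line $\bar{E}_{ll}^{\rm D2D}$ in the convex case, versus a single crossing in the monotone case.
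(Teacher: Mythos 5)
Your proof is correct and follows essentially the same route as the paper: partition $[0,T]$ by the breakpoints $\{\tau_l^{\rm min},\tau_l^{\rm max}\}_{l=1}^L$ into at most $2L+1$ subintervals, observe that on each subinterval every $E_l$ is a single branch (the constant $\bar{E}_{ll}^{\rm D2D}$ or the convex/decreasing $E_{ll}^{\rm CELL}$), and conclude by closure of convexity and monotonicity under finite sums. Your sublevel-set verification that each $\Delta_l$ is an interval merely formalizes what the paper argues informally just before the lemma (the graphs of $E_{ll}^{\rm CELL}$ and $\bar{E}_{ll}^{\rm D2D}$ intersect at most twice in the convex case and once in the monotone case), so it is a refinement of, not a departure from, the paper's argument.
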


\begin{proof}
For each pair-$l$, by its definition we have that if $\Delta_l = \emptyset$, $E_{ll}^{\rm CELL}(t_{\rm ul})$ is a constant value on $[0,T]$, otherwise,   $E_{ll}^{\rm CELL}(t_{\rm ul})$ is a constant value or $+\infty$ in the two intervals $[0,\tau_l^{\rm min})$ and $(\tau_l^{\rm max},T]$. During the interval $[\tau_l^{\rm min},\tau_l^{\rm max}]$,  in the FO-SE case $E_{ll}^{\rm CELL}(t_{\rm ul})$ is given by~\eqref{SE-MS}, which a convex function of $t_{\rm ul}$; while in the FO-UE case, $E_{ll}^{\rm CELL}(t_{\rm ul})$ is given by~\eqref{UE-MS}, which is a monotonically decreasing function in $t_{\rm ul}$. The function $F(t_{\rm ul})$ is obtained as the sum of $E_{ll}^{\rm
    CELL}(t_{\rm ul})$ of all $L$ pairs. Hence, the whole interval $[0,T]$ is divided into $J \leq 2L+1$ adjacent intervals.
In the FO-UE case, $F(t_{\rm ul})$ is the sum of constants and convex functions in each interval, which makes $F(t_{\rm ul})$  piecewise convex. In the FO-UE case, on the other hand, $F(t_{\rm ul})$ is the sum of constants and monotonically decreasing functions in each interval, which makes $F(t_{\rm ul})$ piecewise decreasing.
\end{proof}

\begin{figure}[h]
     \begin{center}
        \subfigure[FO-UE three links: energy-cost functions.]{
            \label{User_ToyEx: sub1}
            \includegraphics[scale=0.33]{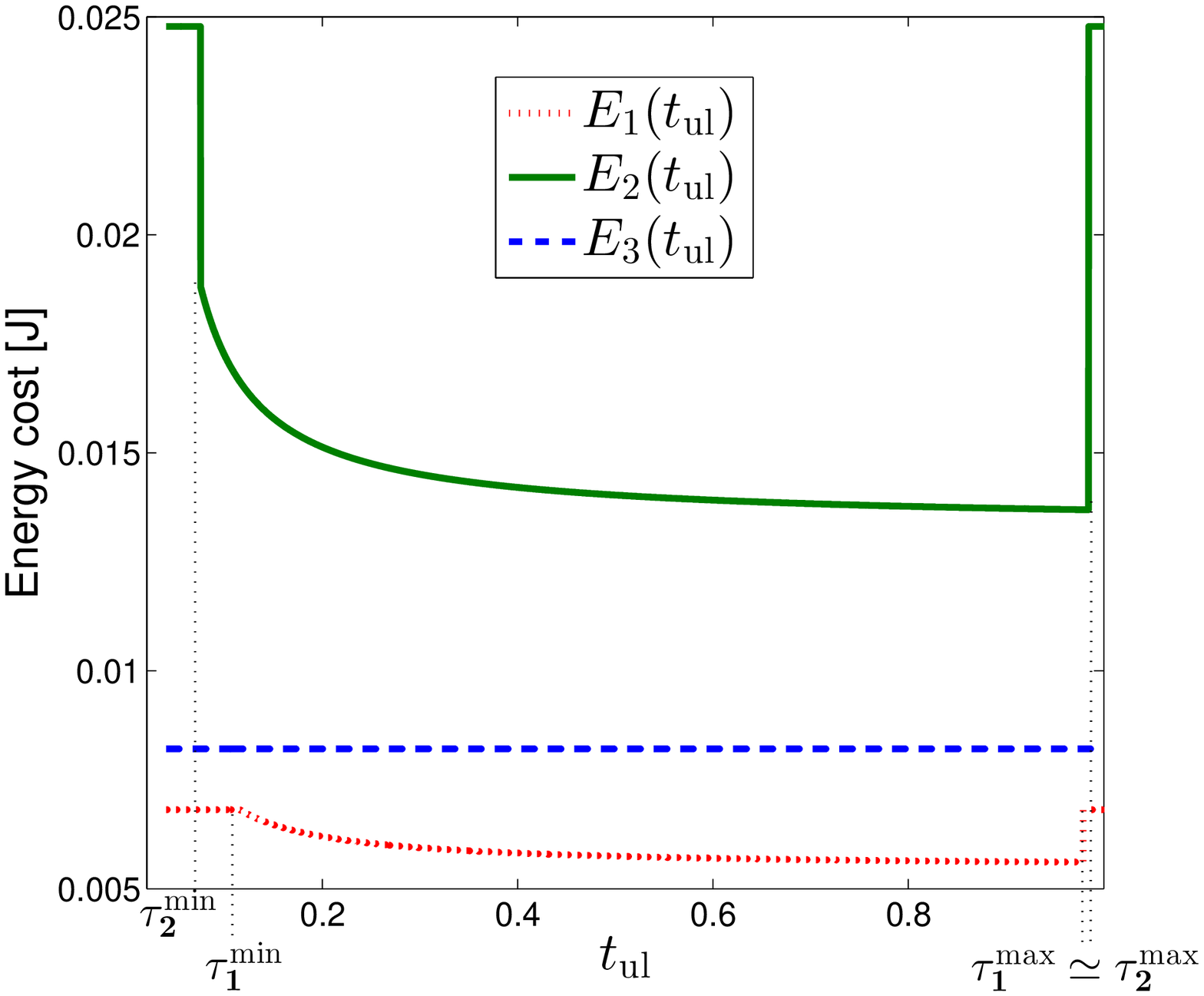}
        }
        \subfigure[FO-UE three links: $F(t_{\rm ul})$.]{
           \label{User_ToyEx: sub2}
           \includegraphics[scale=0.33]{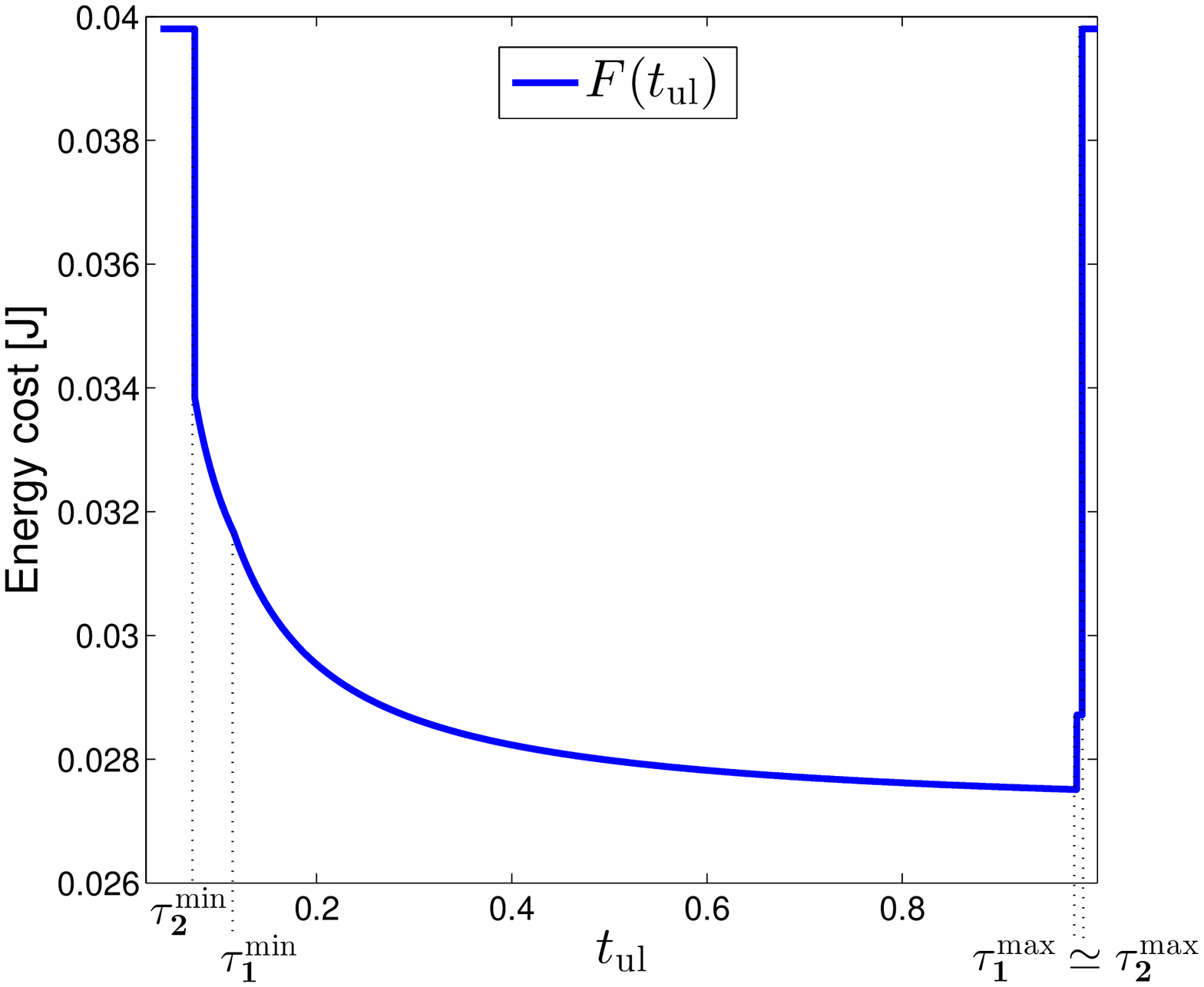}
        }
    \end{center}
\caption{FO-UE problem: minimum energy-cost functions of three transmitter-receiver pairs (a), and their corresponding sum $F(t_{\rm ul})$ (b).}\label{User_ToyEx}
\end{figure}

\begin{figure}[h]
     \begin{center}
        \subfigure[FO-SE three links: energy-cost functions.]{
            \label{Ntwrk_ToyEx: sub1}
            \includegraphics[scale=0.33]{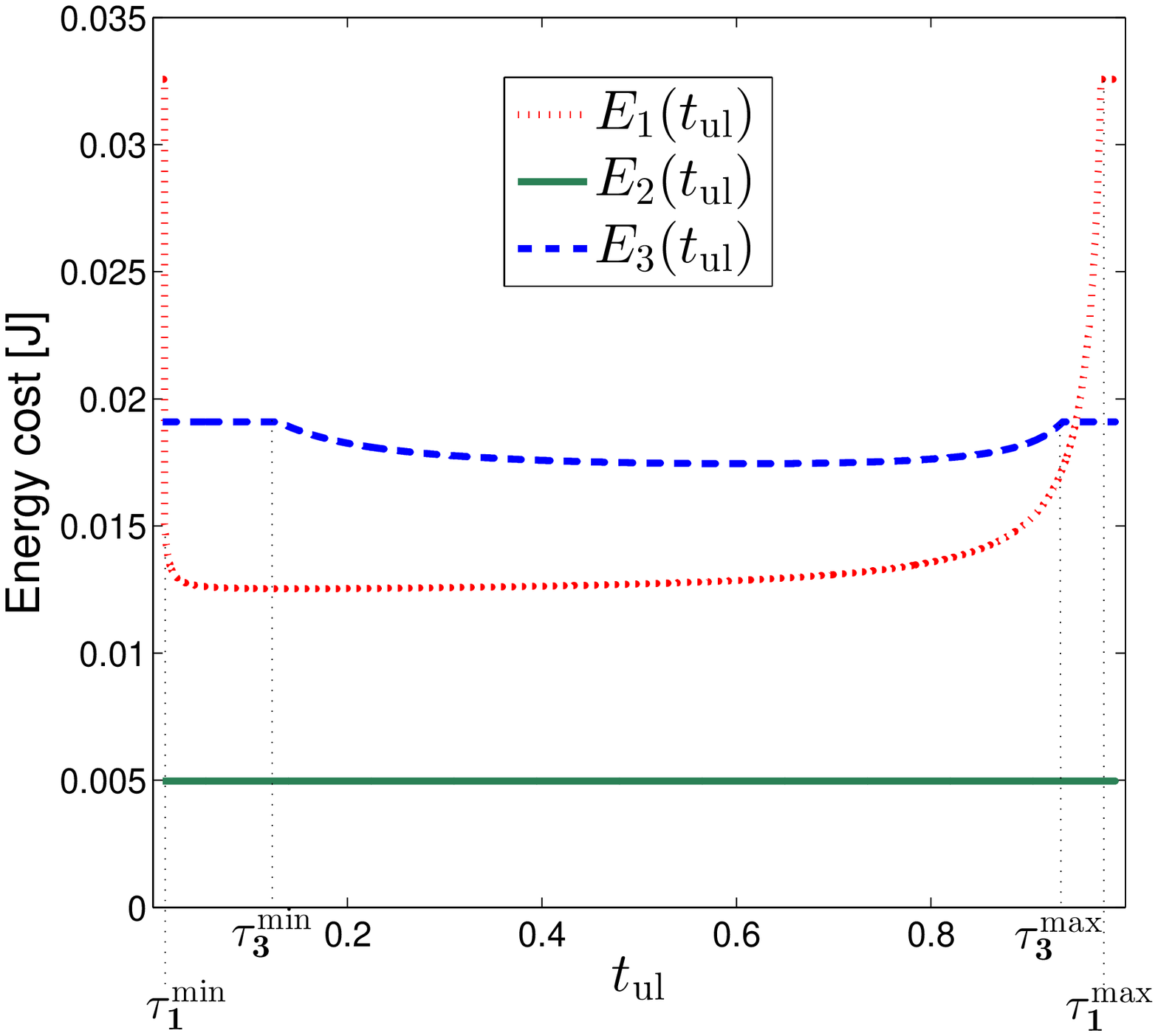}
        }
        \subfigure[FO-SE three links: $F(t_{\rm ul})$.]{
           \label{Ntwrk_ToyEx: sub2}
           \includegraphics[scale=0.33]{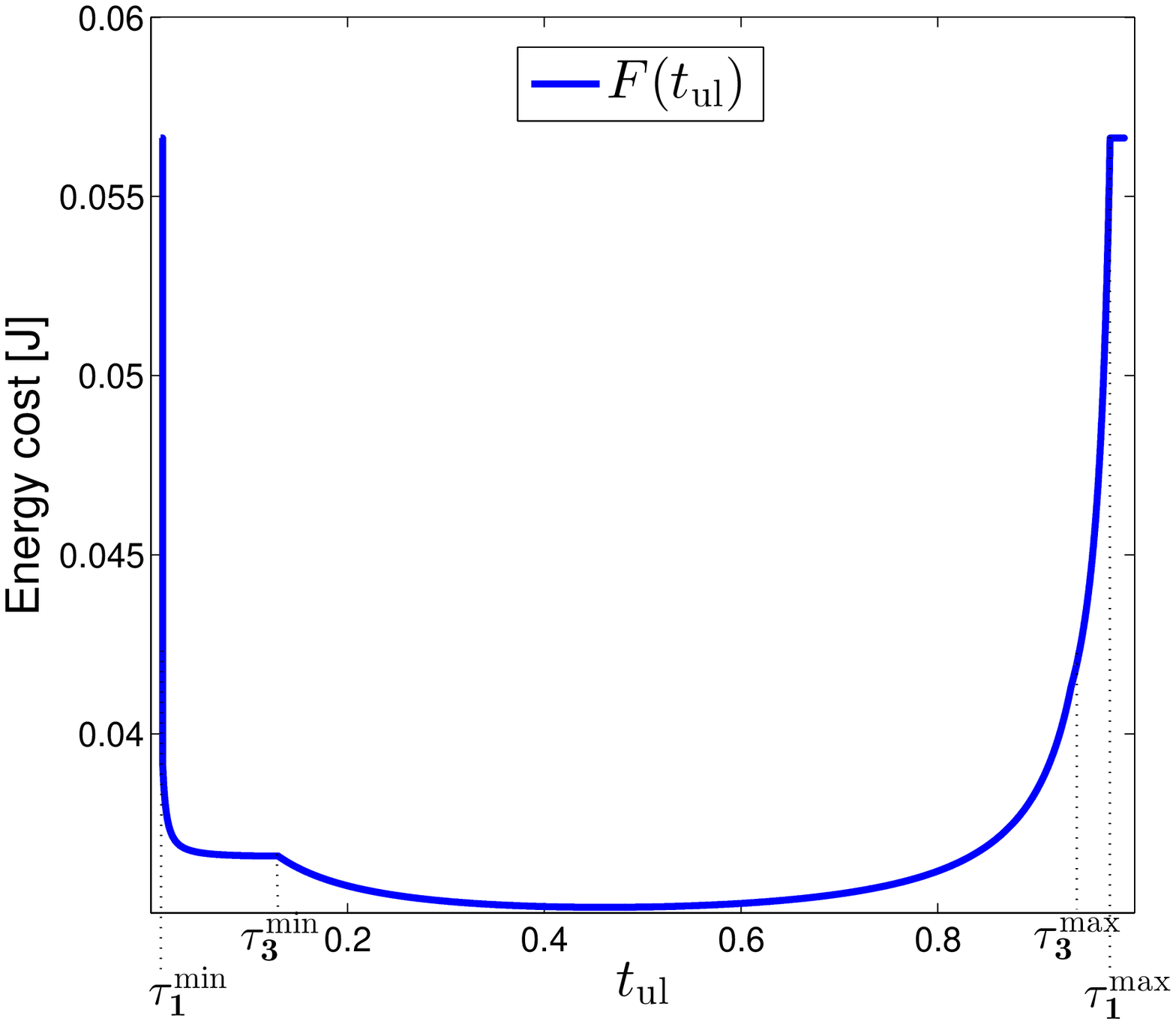}
        }
    \end{center}
\caption{FO-SE problem: minimum energy-cost functions of three transmitter-receiver pairs (a), and their corresponding sum $F(t_{\rm ul})$ (b).}\label{Ntwrk_ToyEx}
\end{figure}

Based on Lemma~\ref{lemma: function F}, the optimal solution to~\eqref{PB_ML_2} can be computed efficiently.

\begin{prop}
\begin{enumerate}[a)]
\item In the FO-SE case, let $\cup_{j=1}^{J} \Gamma_j$ be a partition of $[0, T]$ induced by the points $\{ \tau_1^{\min}, \tau_1^{\max}, \dots, \tau_{L}^{\min}, \tau_{L}^{\max}\}$. Then, the optimal UL time allocation $t_{\rm ul}^\star$ can be found by solving \emph{at most} $2L - 1$ single-variable convex  optimization problems of the form
\[\underset{t_{\rm ul} \in \Gamma_j}{\text{minimize}}
\quad  \sum_{l\in {\mathcal L}} E_{l}(t_{\rm ul})
\]
\item In the FO-UE case,
$t_{\rm ul}^\star \in \{ \tau_1^{\rm max}, \tau_2^{\max}, \dots, \tau_{L}^{\max} \}$. Moreover, if $\underset{l}{\max}\{\tau_l^{\rm min}\} \leq \underset{l}{\min} \{\tau_l^{\rm max}\}$, then $t_{\rm ul}^\star = \underset{l}{\min} \{\tau_l^{\rm max}\} =\underset{l} {\min} \{T -\frac{b_l}{r_{0l}^{\max}}\}$.
\end{enumerate}
\end{prop}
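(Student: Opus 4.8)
The plan is to prove both parts directly from Lemma~\ref{lemma: function F}, which already supplies the essential structure: $F(t_{\rm ul})$ is piecewise convex in the FO-SE case and piecewise decreasing in the FO-UE case, with the pieces delimited by the $2L$ breakpoints $\{\tau_1^{\min},\tau_1^{\max},\dots,\tau_L^{\min},\tau_L^{\max}\}$. The guiding idea is that minimizing a piecewise-convex function over $[0,T]$ decomposes into one convex subproblem per piece (part a), while on a piecewise-\emph{decreasing} function the optimum is pinned to a breakpoint, which I then localize further using the overlap hypothesis (part b).

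For part (a), I would sort the $2L$ breakpoints, cutting $[0,T]$ into at most $2L+1$ adjacent subintervals. On the leftmost subinterval (below the smallest breakpoint) and the rightmost one (above the largest) every $t_{\rm ul}$ lies outside every $\Delta_l$, so $E_l(t_{\rm ul})=\bar{E}^{\rm D2D}_{ll}$ for all $l$ and $F$ is a constant there; these two pieces need no optimization, which is exactly why only $2L+1-2=2L-1$ convex subproblems survive. On each remaining interior piece $\Gamma_j$, $F$ is a finite sum of constants and of the convex functions~\eqref{SE-MS}, hence convex, and is minimized by any single-variable convex method such as the bisection used for~\eqref{PB_ntwrk}. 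Finally, since on every $\Delta_l$ the cellular branch satisfies $E_{ll}^{\rm CELL}\le\bar{E}^{\rm D2D}_{ll}$, the constant value on each extreme piece is dominated by the endpoint value of the adjacent interior subproblem, so the global minimizer is simply the best of the $\le 2L-1$ interior solutions. The only thing to get right here is the interval bookkeeping that yields $2L-1$ rather than $2L+1$.

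For part (b), I would first note that in the UE case every nonempty $\Delta_l$ has $\tau_l^{\max}=T-b_l/r_{0l}^{\max}$: since $E_{ll}^{\rm CELL}=E_{l0}$ is decreasing, whenever cellular mode is ever cheaper than D2D it is cheapest at the right edge of power feasibility, giving the closed form $\min_l\tau_l^{\max}=\min_l\{T-b_l/r_{0l}^{\max}\}$. For the first assertion, I would observe that $F$ is non-increasing across each $\tau_l^{\min}$ (there the cheaper cellular branch is entered) but jumps \emph{upward} at each $\tau_l^{\max}$ (to its left $E_l=E_{l0}$ is the cheaper branch, to its right $E_l=\bar{E}^{\rm D2D}_{ll}$ is larger). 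Hence, from any $t_{\rm ul}$, moving to the right, $F$ strictly decreases on each piece and never rises until the next $\tau_l^{\max}$, where the lower left value is attained at the point itself; this shows a minimizer can always be placed at some $\tau_l^{\max}$, proving $t_{\rm ul}^\star\in\{\tau_1^{\max},\dots,\tau_L^{\max}\}$. For the refinement, the hypothesis $\max_l\tau_l^{\min}\le\min_l\tau_l^{\max}$ makes the interval $[\max_l\tau_l^{\min},\min_l\tau_l^{\max}]$ nonempty and puts all pairs simultaneously in cellular mode there, so $F$ is a sum of decreasing functions and is minimized over that interval at its right endpoint $t^\star=\min_l\tau_l^{\max}$.

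The main obstacle is the final step of (b): promoting $t^\star=\min_l\tau_l^{\max}$ from ``best point at which all pairs are still cellular'' to the \emph{global} minimizer among all candidates $\{\tau_k^{\max}\}$. Increasing $t_{\rm ul}$ past $\min_l\tau_l^{\max}$ forces the minimizing pair into D2D, a cost increase of $\bar{E}^{\rm D2D}_{ll}-E_{l0}(\tau_l^{\max})$, but simultaneously lengthens $t_{\rm ul}$ for the pairs that remain cellular, a cost decrease in their $E_{l0}$. The delicate part is showing the penalty always outweighs the savings; I would not expect the bare overlap condition to settle this, so I would first test the inequality on two- and three-pair instances and be prepared either to invoke an additional structural fact about the relative sizes of the D2D penalty and the UL-energy savings, or to sharpen the hypothesis, before trusting the clean conclusion $t^\star=\min_l\tau_l^{\max}$. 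By contrast, the analogous step in (a) is harmless, because piecewise convexity lets us simply enumerate and compare all pieces rather than having to rank the candidate breakpoints.
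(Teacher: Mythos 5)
Your handling of part (a) and of the first claim in part (b) is correct and follows essentially the same route as the paper's proof: the two extreme subintervals are excluded (the paper phrases this as $F(t_{\rm ul})$ attaining its maximum on $\Gamma_1$ and $\Gamma_J$; your observation that every $E_l$ equals its D2D constant there is the same fact), each of the at most $2L-1$ remaining pieces yields one single-variable convex problem in the SE case, and in the UE case piecewise monotonicity pushes the minimizer into the breakpoint set, after which each $\tau_l^{\min}$ is discarded because at least one component of $F$ keeps decreasing to its right. Your preliminary remark that every nonempty $\Delta_l$ has $\tau_l^{\max}=T-b_l/r_{0l}^{\max}$ is exactly what justifies the identity $\min_l\{\tau_l^{\max}\}=\min_l\{T-b_l/r_{0l}^{\max}\}$, which the paper uses without comment.

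Your refusal to endorse the final claim of (b) is vindicated, and this is the important finding: the paper's proof gives no argument for that step either (it ends with the bare assertion ``Furthermore, if $\max_l\{\tau_l^{\rm min}\}\le\min_l\{\tau_l^{\rm max}\}$, then $t_{\rm ul}^\star=\min_l\{\tau_l^{\rm max}\}$''), and the claim is in fact false under the stated hypothesis; the trade-off you isolated can go the wrong way. A two-pair instance: take $W=\sigma^2=T=1$, $p_l^{\max}=0.25$, $p_0^{\max}=40$; pair $1$ with $b_1=1$, $G_{10}=30$, $G_{01}=0.16$, $G_{11}=16$, so its cellular feasibility window is $[0.467,0.500]$, $\bar{E}^{\rm D2D}_{11}\approx 0.1074$, $E_{10}(0.500)\approx 0.1064$, hence $\Delta_1\approx[0.497,0.500]$ and the D2D penalty at $\tau_1^{\max}$ is about $10^{-3}$; pair $2$ with $b_2=2$, $G_{20}=300$, $G_{02}=10^{6}$, and $G_{22}=1$ so that D2D is power-infeasible ($\bar{E}^{\rm D2D}_{22}=+\infty$), hence $\Delta_2=[0.462,0.886]$. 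The overlap hypothesis holds ($0.497\le 0.500$), so the claim asserts $t_{\rm ul}^\star=\tau_1^{\max}=0.500$; but $F(0.500)\approx 0.1064+0.0893=0.196$, while $F(\tau_2^{\max})=\bar{E}^{\rm D2D}_{11}+E_{20}(0.886)\approx 0.1074+0.0253=0.133$, so the true optimum is $\tau_2^{\max}$. (The same happens with $G_{22}=26$, i.e., with both pairs D2D-feasible, so infeasibility is not the culprit.) What the overlap hypothesis actually yields is only that $\min_l\{\tau_l^{\max}\}$ is the best \emph{all-cellular} allocation, i.e., the minimizer of $F$ over $[\max_l\tau_l^{\min},\min_l\tau_l^{\max}]$; promoting it to the global minimizer requires precisely the extra condition you anticipated, namely that the exit penalty $\bar{E}^{\rm D2D}_{ll}-E_{l0}(\tau_l^{\max})$ of each evicted pair dominates the residual UL savings of the pairs that remain cellular. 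So keep part (a) and the enumeration claim $t_{\rm ul}^\star\in\{\tau_1^{\max},\dots,\tau_L^{\max}\}$ as proved, and either add such a condition to the ``Moreover'' statement or drop it; your instinct to test small instances before trusting it was exactly right.
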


\begin{proof}
Since for each user pair-$l$, $E_{ll}^{\rm CELL}(t_{\rm ul})$  reaches its maximum value  during the two intervals $[0,\tau_l^{\rm min})$ and $(\tau_l^{\rm  max},T]$, thus $F(t_{\rm ul})$ achieves its maximum value in the two intervals $\Gamma_1=\left[ 0, \min_l\{\tau_l^{\rm min}\}\right]$ and $\Gamma_J=\left[\max_l\{\tau_l^{\rm max}\}, T\right]$. Hence,  $t_{\rm ul}^\star$  is not in $\Gamma_1$ and $\Gamma_J$, but must be found in one of the remaining (at most) $2L-1$ intervals.
By Lemma~\ref{lemma: function F}, in the FO-SE case, $F(t_{\rm ul})$ is piecewise convex. Hence, its global minimum can be found among its $2L-1$ local minima in each interval.
In the FO-UE case, from Lemma~\ref{lemma: function F}, we know that $F(t_{\rm ul})$ is piecewise decreasing. Thus, its global minimum can be found in the set $\cup_{l\in \mathcal{L}} \{\tau_l^{\rm min}, \tau_l^{\rm max}\}$. However, for each $\tau_l^{\rm min}$, there is at least one component $E_l(t_{\rm ul})$ in the sum defining $F(t_{\rm ul})$, that decreases for $t_{\rm ul} \geq \tau_l^{\rm min}$. Therefore, the global minimum can only be found in the set $\cup_l \{ \tau_l^{\rm  max}\}$. Furthermore, if $\underset{l}{\max} \{\tau_l^{\rm min}\} \leq \underset{l}{\min} \{\tau_l^{\rm max}\}$, then $t_{\rm ul}^\star = \underset{l}{\min} \{\tau_l^{\rm max}\}$.
\end{proof}

Given the optimal solution $t_{\rm ul}^\star$ to problem~\eqref{PB_ML_2}, the
optimal mode selection vector $\mathbf{m}^\star$ of
Problem~\eqref{PB_ML} is then given by setting, $\forall l \in \mathcal{L}$:
\[
m_l^\star = \begin{cases}
				0 & \text{if } E_{ll}^{\rm CELL}(t_{\rm ul}^\star) \leq  \bar{E}^{\rm D2D}_{ll}\\
				1 & \text{otherwise},
\end{cases}
\]
and the corresponding optimal transmission powers are derived as in~\eqref{eq: opt powers}.

\section{Minimum-Energy Mode Selection with D2D Resource Sharing} \label{Sec: Interf}
To increase the cell capacity and the spectral efficiency of the system, we now consider the D2D resource sharing (RS) strategy, where all communications in D2D mode are assigned the same channel resource. We show that the optimal mode selection and the optimal power/time allocation are harder to compute when interference is considered among multiple communications. Nevertheless, we develop a combinatorial optimization algorithm that is guaranteed to find the optimal solution, and often does so very quickly. This optimal method is complemented by a fast heuristic, suitable for real-time implementation under practical signalling constraints.

\subsection{Optimal resource allocation via mixed-integer nonlinear programming}
The effect of interference on the energy consumption in D2D mode appears explicitly in \eqref{eq: energy}.
Due to the fixed time allocation $T$ (see Lemma~\ref{lemma: Energy cost D2D}), minimizing the energy consumption of D2D communications is equivalent to minimizing the transmission powers. To meet the minimum rate requirement in~\eqref{eq: rateConstraint2}, the transmission power of any pair-$l$ in D2D mode must be such that
\[
 p_{ll} \geq \left[\exp\left( \frac{b_l}{WT}\right) -1 \right] \frac{\sigma^2 + I_l}{G_{ll}}
\]
Introducing $\gamma_l^{\rm{tgt}} = \left[ \exp \left(\frac{b_l}{WT} \right) -1 \right]$ as the target signal-to-interference-plus-noise ratio (SINR) required to satisfy the session rate requirement of pair-$l$, $\eta_l = \frac{\gamma_l^{\rm{tgt}}\sigma^2}{G_{ll}}$, and $h_{lj} = \gamma_l^{\rm{tgt}} \frac{G_{jl}}{G_{ll}}$, we can re-write this inequality as
\begin{equation} \label{eq: P_I}
p_{ll} \geq  \eta_l  + \sum_{j\neq l} p_{jj} h_{lj}.
\end{equation}

The joint mode selection and power/time allocation problem can now be formulated as the following MINLP problem:
\begin{subequations} \label{PB_ML_I}
\begin{align}
   \underset{\mathbf{m} , t_{\rm ul}, p_{ll} }{\text{minimize}}
        & \quad  \sum_{l=1}^L (T p_{ll}) m_l + E_{ll}^{\rm CELL}(t_{\rm ul}) (1-m_l) \\
    \text{subject to}
        &\quad  \frac{b_l}{r_{l0}^{\max}} - T m_l \leq t_{\rm ul} \leq T -\frac{b_l}{r_{0l}^{\max}} + T m_l, &&\forall l, \label{C1: PB_ML_I} \\
        &\quad  \left[ \eta_l  +  \sum_{j\neq l} p_{jj} h_{lj} \right] - C(1- m_l) \leq p_{ll}, &&\forall l, \label{C2: PB_ML_I}\\
                &\quad  0\leq p_{ll} \leq p_l^{\rm max} m_l, &&\forall l\\
        &\quad \mathbf{m} \in \{0,1\}^L, \quad t_{\rm ul} \in [0,T].
\end{align}
\end{subequations}
Here, $\mathbf{m}$ is the mode selection vector defined in~\eqref{m}. Constraint~\eqref{C1: PB_ML_I} ensures power feasibility for pairs in cellular mode, while~\eqref{C2: PB_ML_I} guarantees that the rate requirement is satisfied for each pair in D2D mode. The constant $C$ in~\eqref{C2: PB_ML_I} is a large number ($C=\max_l \{\eta_l + \sum_{l} p_{ll}^{\max}\}$, for example) ensuring that the constraint is only enforced for users in D2D mode.
The expression for $ E_{ll}^{\rm CELL}(t_{\rm ul})$ in the objective function is either~\eqref{SE-MS} or~\eqref{UE-MS}, depending on whether we are interested in the RS-SE or RS-UE problem, respectively.

Problem~\eqref{PB_ML_I} belongs to the class of mixed boolean-convex problem, where for each fixed $\mathbf{m} \in \{0,1\}^L$ the objective function is convex in the continuous variables.
In general, MINLPs are NP-hard problems \cite{LeLe2012}, combining the combinatorial difficulty of optimizing over discrete variable sets, with the challenges of handling nonlinear functions. Its solution time grows exponentially with the problem dimension. If $L$ is very small (\emph{i.e.}, smaller than 15), it can be solved exactly by exhaustive enumeration of the $2^L$ possible mode selection vectors.
However, realistic cellular networks might consist of a large number of user pairs. For this reason, we propose an algorithm, based on a B\&B strategy (\emph{e.g.}, \cite{dakin1965tree, Gupta1985}), that often finds the optimal solution to~\eqref{PB_ML_I} in a much more efficient way than the exhaustive search.

Due to space restrictions, in the sequel we will focus on algorithms that solve the RS-UE problem, considering that the mobile devices are the most energy-sensitive component of the network. However, the same approaches can be applied to the RS-SE case.

\subsection{A branch-and-bound approach for finding the optimal solution} \label{opt_sol}
We propose to solve the MINLP problem~\eqref{PB_ML_I} using a B\&B approach, where possible mode selection vectors $\mathbf{m}$ are explored through a binary tree. Each node of the tree (except the root) represents a subproblem where one of the mode selection variables $m_l$, with $ l\in \mathcal{L}$, is set to either $0$ or $1$. Each branch corresponds to a subset of the possible mode selection vectors.

Before proceeding, it is convenient to introduce a definition and two useful propositions on the feasibility of the mode selection vector $\mathbf{m}$.  Let $\mathbf{H}$  be the non-negative matrix  with entries $H_{lj}= h_{lj}$ if $l \neq j$, and zero otherwise, and the vectors $\mathbf{p} = (p_{ll}, \forall l \in \mathcal{L})^T$, $\mathbf{p}^{\max} = (p_{l}^{\max}, \forall l \in \mathcal{L})^T$ and $ \boldsymbol{\mathbf{\eta}}= (\eta_l, \forall l \in \mathcal{L})^T$. 
For each mode selection vector $\mathbf{m}$, we can define the corresponding set of pairs assigned to D2D mode and to cellular mode as $\mathcal{D}_\mathbf{m}$ and  $\mathcal{C}_\mathbf{m}$, respectively. 
Let $\mathbf{A_m}$ denote the $L \times |\mathcal{D}_\mathbf{m}|$ incidence matrix, which is formed by removing the $l$-th column from the $L \times L$ identity matrix if $m_l=0$.
We define $\mathbf{H_m}=\mathbf{A}_{\mathbf{m}}^T \mathbf{H}\mathbf{A_m}$, $\mathbf{p_m}=\mathbf{A}_{\mathbf{m}}^T\mathbf{p}$, and $\boldsymbol{\mathbf{\eta_m}}=\mathbf{A}_{\mathbf{m}}^T \boldsymbol{\mathbf{\eta}}$. The constraint~\eqref{C1: PB_ML_I} can be written in matrix form as
\begin{equation} \label{Ineq_1}
(\mathbf{I_m} - \mathbf{H_m})  \mathbf{p_m} \geq \boldsymbol{\mathbf{\eta_m}} \quad  \text{and} \quad \mathbf{p_m} \leq \mathbf{p}_{\mathbf{m}}^{\rm max},
\end{equation}
where  the inequalities are component-wise, $\mathbf{I_m}=\mathbf{A}_{\mathbf{m}}^T\mathbf{A}_{\mathbf{m}}$, and $\mathbf{p}_{\mathbf{m}}^{\rm max} = \mathbf{A}_{\mathbf{m}}^T \mathbf{p}^{\rm max}$.

Matrix $\mathbf{H_m}$ has strictly positive off-diagonal elements, and we can assume that it is irreducible because we do not consider totally isolated groups of pairs that do not interact with each other. Let $\rho(\mathbf{H_m})$ denote the largest real eigenvalue of matrix $\mathbf{H_m}$.
From the Perron-Frobenius theorem~\cite{1985:matrix}, we have the following well-known proposition:

\begin{prop} [\cite{PC2008} Chapter~2] \label{Prop: PF}
For a given mode selection vector $\mathbf{m}$, the necessary and sufficient condition for the existence of a positive $ \mathbf{p_m} $ to solve inequality $(\mathbf{I_m} - \mathbf{H_m})  \mathbf{p_m} \geq  \boldsymbol{\mathbf{\eta_m}}$ is that
\begin{equation}\label{PF}
\rho(\mathbf{H_m})<1.
\end{equation}
Moreover, $\mathbf{p^\star_m} = (\mathbf{I_m} - \mathbf{H_m})^{-1}\boldsymbol{\mathbf{\eta_m}}$ is its component-wise minimum solution.
\end{prop}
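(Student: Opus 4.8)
The plan is to establish the two assertions separately: first the equivalence between feasibility of a positive $\mathbf{p_m}$ and the spectral condition $\rho(\mathbf{H_m})<1$, and then the minimality of $\mathbf{p^\star_m}$. Throughout I would exploit that $\mathbf{I_m}=\mathbf{A}_{\mathbf{m}}^T\mathbf{A_m}$ is simply the $|\mathcal{D}_\mathbf{m}|\times|\mathcal{D}_\mathbf{m}|$ identity, so the feasibility condition reads $(\mathbf{I_m}-\mathbf{H_m})\mathbf{p_m}\geq\boldsymbol{\eta_m}$ with $\boldsymbol{\eta_m}>0$ (each $\eta_l>0$) and $\mathbf{H_m}\geq 0$ irreducible with zero diagonal.

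For \emph{sufficiency}, I would assume $\rho(\mathbf{H_m})<1$. The key algebraic fact is that $(\mathbf{I_m}-\mathbf{H_m})$ is then invertible with a convergent Neumann series $(\mathbf{I_m}-\mathbf{H_m})^{-1}=\sum_{k=0}^\infty \mathbf{H_m}^k$, which is entrywise non-negative because each power $\mathbf{H_m}^k\geq 0$. Setting $\mathbf{p^\star_m}=(\mathbf{I_m}-\mathbf{H_m})^{-1}\boldsymbol{\eta_m}$ then yields a vector that is positive (indeed $\mathbf{p^\star_m}\geq\boldsymbol{\eta_m}>0$) and satisfies $(\mathbf{I_m}-\mathbf{H_m})\mathbf{p^\star_m}=\boldsymbol{\eta_m}$, hence the inequality with equality. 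This single construction simultaneously exhibits a feasible point and the candidate minimizer used later.

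For \emph{necessity} I would argue by contradiction using the left Perron eigenvector. Let $\mathbf{u}>0$ be the left eigenvector guaranteed by the Perron-Frobenius theorem, so $\mathbf{u}^T\mathbf{H_m}=\rho(\mathbf{H_m})\,\mathbf{u}^T$. Given any feasible positive $\mathbf{p_m}$, the inequality $(\mathbf{I_m}-\mathbf{H_m})\mathbf{p_m}\geq\boldsymbol{\eta_m}>0$ forces $\mathbf{H_m}\mathbf{p_m}<\mathbf{p_m}$ componentwise. Multiplying on the left by $\mathbf{u}^T>0$ gives $\rho(\mathbf{H_m})\,\mathbf{u}^T\mathbf{p_m}<\mathbf{u}^T\mathbf{p_m}$, and since $\mathbf{u}^T\mathbf{p_m}>0$ this forces $\rho(\mathbf{H_m})<1$. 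This left-eigenvector projection is the step I expect to be the crux, since it is where irreducibility is genuinely needed, namely to guarantee a strictly positive $\mathbf{u}$ so that the scalar products remain strictly positive.

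Finally, for \emph{minimality} I would take an arbitrary feasible $\mathbf{p_m}$ and apply the non-negative operator $(\mathbf{I_m}-\mathbf{H_m})^{-1}=\sum_{k}\mathbf{H_m}^k\geq 0$ to both sides of $(\mathbf{I_m}-\mathbf{H_m})\mathbf{p_m}\geq\boldsymbol{\eta_m}$. Because multiplication by a non-negative matrix preserves componentwise inequalities, this yields $\mathbf{p_m}\geq(\mathbf{I_m}-\mathbf{H_m})^{-1}\boldsymbol{\eta_m}=\mathbf{p^\star_m}$, so $\mathbf{p^\star_m}$ is the componentwise minimum feasible solution. The only subtlety to verify is the legitimacy of the inequality-preservation argument, which again reduces to the non-negativity of the Neumann series already established in the sufficiency step; no further work is required.
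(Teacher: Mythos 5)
Your proof is correct. Note, however, that the paper itself offers no proof of this proposition: it is imported verbatim from the cited reference (\cite{PC2008}, Chapter~2) as a well-known consequence of the Perron--Frobenius theorem, so there is no in-paper argument to compare against. What you have written is essentially the standard textbook proof that the citation points to: the Neumann series $(\mathbf{I_m}-\mathbf{H_m})^{-1}=\sum_{k\geq 0}\mathbf{H_m}^k\geq 0$ gives both sufficiency and, by monotonicity of multiplication by a non-negative matrix, the component-wise minimality of $\mathbf{p^\star_m}$; the left Perron eigenvector gives necessity. All steps check out: $\mathbf{I_m}=\mathbf{A}_{\mathbf{m}}^T\mathbf{A_m}$ is indeed the identity, $\boldsymbol{\eta_m}>0$ strictly (since $\gamma_l^{\rm tgt},\sigma^2,G_{ll}>0$), which is exactly what upgrades $\mathbf{H_m}\mathbf{p_m}\leq\mathbf{p_m}$ to a strict inequality and hence $\rho(\mathbf{H_m})<1$ rather than $\rho(\mathbf{H_m})\leq 1$. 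One small remark: irreducibility is not actually essential for your necessity step --- a non-negative (possibly not strictly positive) left eigenvector $\mathbf{u}\neq 0$ already suffices, because the strict component-wise inequality $\mathbf{H_m}\mathbf{p_m}<\mathbf{p_m}$ survives pairing with any nonzero non-negative $\mathbf{u}$, and $\mathbf{u}^T\mathbf{p_m}>0$ follows from $\mathbf{p_m}>0$; so your argument is in fact slightly more general than you claim.
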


Proposition~\ref{Prop: PF} provides an easy condition to verify if a mode selection vector $\mathbf{m}$ is feasible.
\begin{defn}[Feasible mode selection vector]
A \textit{mode selection vector} $\mathbf{m}$ is  feasible if both condition~\eqref{PF} and $(\mathbf{I_m} - \mathbf{H_m})^{-1}\boldsymbol{\mathbf{\eta_m}} \leq \mathbf{p}_{\mathbf{m}}^{\rm max}$ are verified.
\end{defn}
\begin{prop} [\cite{Feasibility_Ephremides2006}] \label{Prop: InfeasibleSubSet}
If $\mathbf{m}$ is not feasible, every other mode selection  vector $\mathbf{\tilde m}$, such that the set $\{ l \in \mathcal{L}: m_l=1 \} \subseteq  \{l \in \mathcal{L}: \tilde{m}_l=1 \}$, is not feasible.
\end{prop}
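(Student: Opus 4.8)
The plan is to prove the contrapositive: if $\tilde{\mathbf{m}}$ is feasible, then every $\mathbf{m}$ with $\{l : m_l = 1\} \subseteq \{l : \tilde m_l = 1\}$ is feasible as well. Writing $\mathcal{D}_\mathbf{m} = \{l : m_l = 1\} \subseteq \mathcal{D}_{\tilde{\mathbf{m}}} = \{l : \tilde m_l = 1\}$, the active D2D pairs under $\mathbf{m}$ form a subset of those under $\tilde{\mathbf{m}}$, so the smaller system suffers no more interference than the larger one. Since feasibility bundles together the spectral-radius requirement \eqref{PF} and the power-cap requirement $(\mathbf{I_m}-\mathbf{H_m})^{-1}\boldsymbol{\eta_m}\le\mathbf{p}_{\mathbf{m}}^{\max}$, the cleanest route is to exhibit a single admissible power vector for the smaller system and then let Proposition~\ref{Prop: PF} recover both conditions at once, rather than splitting into cases according to which of the two fails.

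Concretely, I would start from a feasible $\tilde{\mathbf{m}}$: by definition the component-wise minimum solution $\mathbf{p}^\star_{\tilde{\mathbf{m}}}=(\mathbf{I}_{\tilde{\mathbf{m}}}-\mathbf{H}_{\tilde{\mathbf{m}}})^{-1}\boldsymbol{\eta}_{\tilde{\mathbf{m}}}$ is a positive vector satisfying the SINR inequalities \eqref{eq: P_I} for every $l\in\mathcal{D}_{\tilde{\mathbf{m}}}$ together with $\mathbf{p}^\star_{\tilde{\mathbf{m}}}\le\mathbf{p}_{\tilde{\mathbf{m}}}^{\max}$. Restricting this vector to the coordinates in $\mathcal{D}_\mathbf{m}$ gives a vector $\hat{\mathbf{p}}$, and the key estimate is that for each $l\in\mathcal{D}_\mathbf{m}$,
\[
\hat p_{ll}\;\ge\;\eta_l+\sum_{\substack{j\neq l\\ j\in\mathcal{D}_{\tilde{\mathbf{m}}}}} \mathbf{p}^\star_{\tilde{\mathbf{m}},j}\,h_{lj}\;\ge\;\eta_l+\sum_{\substack{j\neq l\\ j\in\mathcal{D}_\mathbf{m}}}\hat p_{jj}\,h_{lj},
\]
where the second inequality holds because discarding the interferers in $\mathcal{D}_{\tilde{\mathbf{m}}}\setminus\mathcal{D}_\mathbf{m}$ only removes nonnegative terms, as all $h_{lj}\ge 0$ and $\mathbf{p}^\star_{\tilde{\mathbf{m}},j}\ge 0$. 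Hence $\hat{\mathbf{p}}$ is a positive vector obeying $(\mathbf{I_m}-\mathbf{H_m})\hat{\mathbf{p}}\ge\boldsymbol{\eta_m}$, and, being a sub-vector of an admissible one, it also satisfies $\hat{\mathbf{p}}\le\mathbf{p}_{\mathbf{m}}^{\max}$.

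Finally I would feed $\hat{\mathbf{p}}$ into Proposition~\ref{Prop: PF}: the existence of a positive solution to $(\mathbf{I_m}-\mathbf{H_m})\hat{\mathbf{p}}\ge\boldsymbol{\eta_m}$ forces $\rho(\mathbf{H_m})<1$, so \eqref{PF} holds, and the proposition simultaneously identifies $\mathbf{p}^\star_\mathbf{m}=(\mathbf{I_m}-\mathbf{H_m})^{-1}\boldsymbol{\eta_m}$ as the component-wise minimum, whence $\mathbf{p}^\star_\mathbf{m}\le\hat{\mathbf{p}}\le\mathbf{p}_{\mathbf{m}}^{\max}$. Both feasibility conditions for $\mathbf{m}$ are thereby met, which is exactly the contrapositive. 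The only delicate points are the bookkeeping with the incidence matrices $\mathbf{A_m}$ when restricting vectors and submatrices, and the fact that feasibility encodes \emph{two} conditions at once; the reason to argue through an explicit admissible $\hat{\mathbf{p}}$ is precisely that it lets Proposition~\ref{Prop: PF} dispatch both the invertibility/spectral-radius part and the power-cap part together, avoiding a separate principal-submatrix monotonicity argument for $\rho(\mathbf{H_m})\le\rho(\mathbf{H}_{\tilde{\mathbf{m}}})$ and a separate monotonicity argument for the minimal powers.
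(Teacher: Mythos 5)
Your proof is correct. Note that the paper itself does not prove this proposition --- it imports it from the cited reference \cite{Feasibility_Ephremides2006} --- so there is no in-paper argument to compare against; your contrapositive argument (restrict the minimal feasible power vector of the larger D2D set to the smaller set, observe that dropping interferers only helps, then invoke Proposition~\ref{Prop: PF} to recover both the spectral-radius condition and, via minimality, the power caps) is the standard proof of this monotonicity fact and is complete. Two small points worth making explicit if you write it up: first, $\mathbf{p}^\star_{\tilde{\mathbf{m}}}$ satisfies the SINR constraints with \emph{equality}, and its positivity follows from $\rho(\mathbf{H}_{\tilde{\mathbf{m}}})<1$ via the Neumann series $(\mathbf{I}-\mathbf{H}_{\tilde{\mathbf{m}}})^{-1}=\sum_{k\ge 0}\mathbf{H}_{\tilde{\mathbf{m}}}^k\ge 0$ applied to $\boldsymbol{\eta}_{\tilde{\mathbf{m}}}>0$; second, applying Proposition~\ref{Prop: PF} to the subsystem implicitly requires $\mathbf{H}_{\mathbf{m}}$ to inherit irreducibility, which it does here because the paper assumes all off-diagonal entries of $\mathbf{H}$ are strictly positive, so every principal submatrix is again irreducible (and the singleton case is trivial since then $\mathbf{H}_{\mathbf{m}}=0$). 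Neither point is a gap, just bookkeeping that makes the appeal to Proposition~\ref{Prop: PF} airtight.
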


The main idea of B\&B is to only explore branches of the binary tree that have the potential to produce better solutions than the best solution found so far. This is done by computing upper and lower bounds on the optimal value at each node. If the lower bound of a node is larger than the current upper bound, then there is no need to explore its branches.


To achieve a good performance of B\&B, it is essential to select the branching rule and tree exploration strategies carefully, and to have efficient methods for computing good upper and lower bounds~\cite{floudas1995nonlinear}. 
In our implementation, we have made the following choices:
\begin{enumerate}
	\item \emph{Initial upper bound.} By assumption, letting all pairs communicate in cellular mode, \emph{i.e.} setting $m_l=0$ for all $l$, is always feasible. We therefore take the corresponding energy cost as an initial upper bound on the optimal cost, computed by solving 
\begin{equation} \label{PB: All cell}
\begin{split}
    \underset{ t_{\rm ul}\in[0,T]} {\text{minimize}}
        & \quad  \sum_{l \in \mathcal{L}} E_{ll}^{\rm CELL}(t_{\rm ul})\\
    \text{subject to}
        &\quad  \max_{l  \in \mathcal{L}}\{\frac{b_l}{r_{l0}^{\max}}\}  \leq t_{\rm ul} \leq  \min_{l  \in \mathcal{L}}\{ T -\frac{b_l}{r_{0l}^{\max}} \},
\end{split}
\end{equation} 	
where $E_{ll}^{\rm CELL}(t_{\rm ul})$ is given by \eqref{UE-MS} in the RS-UE case.
	\item \emph{Branching rule.} The branching rule selects the next variable to fix at each node of the tree. The goal is to identify the branching variable that changes the problem the most, either to quickly detect branches that can be cut, or to significantly improve the current solution.
	
Our branching strategy for the RS-UE problem is based on first solving the FO-UE problem described in \S~\ref{Sec: NoInterf}. The corresponding optimal mode selection vector $\mathbf{m}^{\rm FO}$ reveals the set ${\mathcal D}^{\rm FO} = \{ l \;\vert\; m^{\rm FO}_l=1\}$ of pairs that prefer to communicate in D2D mode in an interference-free environment.  For each link-$l\in {\mathcal D}^{\rm FO}$ we define a measure of its strength $s_l= \sum_{i\in {\mathcal D}^{\rm FO}, \, i\neq l} G_{li}/G_{ll}$. This measure attempts to account for both the interference that pair-$l$ produces on the shared frequency resource and its own direct gain. The branching rule first selects variables in ${\mathcal D}^{\rm FO}$ in order of decreasing $s_l$, and then considers the remaining variables in an arbitrary order. This rule exploits the fact that the optimal solution of the RS-UE problem is often close to the FO-UE optimal, with a few differences in the pairs in D2D mode, due to interference; pairs that prefer to be in cellular mode in FO will also tend to prefer cellular mode in RS. By fixing pairs with high interference strength first, we increase the likelihood of finding infeasible solutions quickly. Once an infeasible mode selection is found, we can make use of Proposition~\ref{Prop: InfeasibleSubSet} and discard all branches below the current node in the search tree.
	\item \emph{Tree exploration strategy.} Once a branching variable $m_l$ has been selected by the branching rule, the tree exploration strategy determines if the child node to investigate next should be in cellular or D2D mode (that is, have $m_l=0$ or $1$). We always prefer to try to allocate users to D2D mode first (that is, set the branching variable $m_l=1$ first).
	\item \emph{Upper and lower bounds.} 
	Each node of the search tree corresponds to a partial mode selection vector with some components set to $1$ or $0$, while others are still undetermined. We define the two sets of pairs corresponding to the determined (fixed) and undetermined variables, as  $\mathcal{F}$ and $\mathcal{U}$,  respectively.
	
	When we consider a node, we first verify if the fixed variables form a feasible mode selection vector in the sense of Proposition~\ref{Prop: PF}.	
	If they do not, then no bounds are computed and the node and all branches below it are disregarded. Otherwise, upper and lower bounds are obtained as the sum of the minimum energy cost of the pairs in $\mathcal{F}$ and, respectively, an upper and a lower bound of the energy cost of the pairs in $\mathcal{U}$.

To compute an upper bound, we assign all pairs in  $\mathcal{U}$ to cellular mode and we solve the problem formulation in~\eqref{PB: All cell} where $\mathcal{L}$ is replaced by $\mathcal{U}$.
To determine a lower bound, we compute the minimal energy cost of the unassigned pairs when they operate in full orthogonality (\emph{i.e.}, we solve the FO-UE problem over only the unassigned pairs). To understand why this is a lower bound, note that the computation is a proper relaxation of the MINLP where the UL times of fixed and unassigned pairs are allowed to differ, and when the unassigned links that end up in D2D mode do not suffer interference. Furthermore, we strengthen the lower bound by increasing the noise power of pairs in ${\mathcal U}$ by the interference that the transmitters fixed to D2D communication in $\mathcal{F}$ incur on them.
\end{enumerate}

Once achieved the optimal feasible mode selection vector and transmission time, the corresponding optimal powers are obtained as in~\eqref{eq: opt powers} for the cellular users, and as in Proposition~\ref{Prop: PF} for transmitters in D2D mode.

The B\&B algorithm is guaranteed to find the optimal solution, and does so much faster than the exhaustive search, as shown in \S~\ref{Sec: results_B&B}. However, for large networks it can still have impractical running times. In addition, the optimization formulation assumes that all cross-gains between users are known, something that would require significant communication overhead. We therefore turn our attention to heuristics that can be run in real-time and do not assume  centralized knowledge of all the channel gains.

\subsection{Heuristic approach to achieve  a practical sub-optimal solution} \label{heuristic}
In this section, we present a heuristic algorithm that achieves a near-optimal solution to~\eqref{PB_ML_I} in a more practical and scalable way than the B\&B approach. Again, we focus on the UE case.

The key idea of this algorithm is to first determine an initial mode selection vector, together with the corresponding power/time allocation, and then improve this solution by means of a distributed power control algorithm based only on local measurements.
The heuristic algorithm is described in Algorithm \ref{algo1}, with the two main steps in the following:

\begin{algorithm}
\caption{Heuristic approach for RS-UE minimization}
\label{algo1}
\KwIn{ $(\gamma_l^{\rm{tgt}}, G_{ll}, G_{l0}, G_{0l}, \theta) \,\forall l \in \mathcal{L}$}
\KwOut{ $\mathbf{m}^\star, \mathbf{p}^\star$}
$( \mathbf{m}^{\rm FO}, t_{\rm ul}(\mathbf{m}^{\rm FO}), \mathbf{p}(\mathbf{m}^{\rm FO}) )\leftarrow $ solution to FO-UE problem\;
each $l \in  \mathcal{D}_{\mathbf{m}^{\rm FO}}$  acquires $E_{ll}^{\rm CELL}\left(t_{\rm ul}(\mathbf{m}^{\rm FO})\right)$ from the BS\;
$\mathbf{p}^{(0)}  \leftarrow \mathbf{p}(\mathbf{m}^{\rm FO}) $, \,
$\mathbf{m}^{(0)}  \leftarrow \mathbf{m}^{\rm FO} $, \, $k=0$\;
each $l \in  \mathcal{D}_{\mathbf{m}^{\rm FO}}$ computes $\gamma_l^{(0)}$\;

convergence $\leftarrow $ True\;
\While{convergence}
		{$\mathbf{m}^{(k+1)} \leftarrow \mathbf{m}^{(k)}$\;	
		\For {each $l \in \mathcal{D}_{\mathbf{m}^{(k)}}$}
		   	{$p_{ll}^{(k+1)}  = \frac{\gamma_l^{\rm{tgt}}}{\gamma_l^{(k)}} \, p_{ll}^{(k)}$\;		
			\If {$p_{ll}^{(k+1)} > \min \left\lbrace \frac{\theta}{T} E_{ll}^{\rm CELL}\left(t_{\rm ul}(\mathbf{m}^{\rm FO})\right), \, p_l^{\rm max} \right\rbrace$}				
			{$ m_l^{(k+1)} \leftarrow 1, $ \, $ \mathcal{D}_{\mathbf{m}^{(k+1)}} \leftarrow \mathcal{D}_{\mathbf{m}^{(k)}} \setminus \{l\} $\;}
			}	
		each $l \in  \mathcal{D}_{\mathbf{m}^{(k+1)}}$ computes $\gamma_l^{(k+1)}$\;
		\If{$ \gamma_l^{(k+1)} \geq \gamma_l^{\rm{tgt}}, \forall l \in  \mathcal{D}_{\mathbf{m}^{(k+1)}}  $}	
		 {convergence $\leftarrow$ False\;}	
		 $\mathbf{p}^\star \leftarrow \mathbf{p}^{(k+1)},$ \,
		$\mathbf{m}^\star \leftarrow \mathbf{m}^{(k+1)}$\;}
		\end{algorithm}

\begin{enumerate}
\item \textit{Initial phase}:
We adopt the optimal solution to the FO-UE problem in \S~\ref{Sec: NoInterf} as the initial solution, denoted by $\left( \mathbf{m}^{\rm FO}, t_{\rm ul}(\mathbf{m}^{\rm FO}), \mathbf{p}(\mathbf{m}^{\rm FO}) \right)$.
The FO-UE problem is solved by the BS.
For each pair-$l \in \mathcal{D}_{\mathbf{m}^{\rm FO}}$ (\emph{i.e.}, assigned to D2D mode), the BS also computes the energy it would consume if in cellular mode, that is $E_{ll}^{\rm CELL}\left(t_{\rm ul}(\mathbf{m}^{\rm FO})\right)$ from \eqref{UE-MS}, and broadcasts $\mathbf{m}^{\rm FO}$ and $E_{ll}^{\rm CELL}\left(t_{\rm ul}(\mathbf{m}^{\rm FO})\right)$ to each Tx-$l \in \mathcal{D}_{\mathbf{m}^{\rm FO}}$.

The initial mode selection vector $\mathbf{m}^{\rm FO}$ is obtained under the assumption of no interference among the D2D pairs. However, under the RS scenario, all the D2D pairs share the same channel, thus $\mathbf{m}^{\rm FO}$ can be energy inefficient, or even infeasible, due to the interference. Therefore, a distributed power control algorithm is then executed by the D2D pairs to find a feasible and more energy-efficient solution:

\item \textit{Iterative distributed power control for D2D pairs}:
Using the iterative power control method originally proposed by Foschini and Miljanic in~\cite{Foschini93}, each Tx-$l$ in D2D mode can achieve its target SINR $\gamma_l^{\rm tgt}$ by updating its transmit power as follows
\begin{equation} \label{DPC}
p_{ll}^{(k+1)} = \frac{\gamma_l^{\rm tgt}}{\gamma_l^{(k)}} p_{ll}^{(k)},
\end{equation}
where $\gamma_l^{(k)}$ is the perceived SINR for pair-$l  \in \mathcal{D}_{\mathbf{m}^{\rm FO}}$ in iteration-$k$, defined as $\gamma_l^{(k)}= \frac{p_{ll}^{(k)}G_{ll}}{\sigma^2 + \sum_{j\in\mathcal{D}_{\mathbf{m}^{\rm FO}},j\neq l} \, p_{jj}^{(k)}G_{jl}}$, and $\mathbf{p}^{(0)} = \mathbf{p}({\mathbf{m}^{\rm FO}})$.

To achieve a feasible mode selection vector and to further reduce the energy cost, some links in D2D mode need to switch to cellular mode. Specifically, pair-$l$ in D2D mode will switch to cellular mode if its transmit power level exceeds its maximum limit or if it is more energy efficient for it to communicate in cellular mode, that is,
\begin{equation} \label{P_max_D2D_2}
p_{ll}^{(k)} > \min \left\lbrace \frac{\theta}{T} E_{ll}^{\rm CELL}\left(t_{\rm ul}(\mathbf{m}^{\rm FO})\right), \, p_l^{\rm max} \right\rbrace,
\end{equation}
where we introduce the design parameter $\theta \geq 1$.
\begin{remark}  \label{remarkAlpha}
The selection of the parameter $\theta$ accounts for the following key aspects:
\begin{enumerate}
\item Mode switches incur additional signaling overhead between mobile devices and the BS to coordinate the re-allocation of radio resources. 
Hence, if $\theta > 1$ mode switches will occur only if they result in a significant energy gain.
\item  When a pair in D2D mode switches to cellular mode, it requires another orthogonal frequency channel for its transmission. The parameter $\theta$ can control this trade-off between channel reuse and energy consumption. Specifically, a large value of $\theta$ enforces more pairs to communicate in D2D mode and reuse the same channel, even if this comes at the cost of a higher energy consumption.
\item Once the final set of pairs assigned to cellular mode has been obtained, the new optimal UL transmission time cannot be larger than the initial $t_{\rm ul}(\mathbf{m}^{\rm FO})$, which means that the energy consumption of each pair in cellular mode could increase. Hence, by setting $\theta > 1$, we reserve a margin for the energy increase in cellular mode due to mode switches during the distributed power control process.
\end{enumerate}
\end{remark}

During the power update~\eqref{DPC}, if Tx-$l$ finds that condition~\eqref{P_max_D2D_2} is fulfilled, it asks the the BS to switch it to cellular mode and to assign it an orthogonal frequency channel. Otherwise, it keeps updating its power according to~\eqref{DPC}. The BS keeps track of the pairs changing communication mode, and updates the mode selection vector. Specifically, if $\mathbf{m}^{(k)}$ denotes the mode selection vector at the $k$-th iteration of the algorithm, then
\begin{equation*}
m_l^{(k+1)}=
\begin{cases}
0, & \text{if } m_l^{(k)}=1, \text{and condition \eqref{P_max_D2D_2} is verified,} \\
m_l^{(k)}, & \text{otherwise},
\end{cases}
\end{equation*}
where the initial mode selection vector $\mathbf{m}^{(0)}=\mathbf{m}^{\rm FO}$.

This power control algorithm converges to the minimum power levels that the user pairs remaining in D2D mode need, to fulfill the rate requirement. Once the algorithm converges, the BS recomputes the optimal power/time allocation for the user pairs in cellular mode,  broadcasting this information before the data transmissions take place.
\end{enumerate}

\section{Simulations and discussion} \label{Sec: results}
This section presents simulation results that validate our theoretical findings and evaluate our proposed algorithms. All simulations consider a single cell with a BS, equipped with an omnidirectional antenna, positioned in the center of the cell. The simulation parameters listed in Table~\ref{Table1} are chosen to represent an urban LTE deployment and are used throughout.

\begin{table}[ht]
\centering
\begin{center}
\caption{Simulation parameters of the system under study}
\label{Table1}
\end{center}
\begin{tabular}[h]{|l|l|}
\hline
\emph{\textbf{Parameter}}     &       \emph{\textbf{Value}} \\ 
\hline
Carrier Frequency &	1 GHz\\
\hline
Cell Radius  &	500 m\\
\hline
Frequency channel bandwidth  ($W$)  &   5 MHz     \\
\hline
Noise Power ($\sigma^2$)         & -174 dBm/Hz \\
\hline
Path-loss coefficient    ($\alpha$)      & 4 \\
\hline
Path gain at reference distance of 1m ($G_0$)      &  $5.7\cdot 10^{-4}$\\
\hline
Max Tx Power for the BS ($ p_{0}^{\rm max}$) & 40 W \\
\hline
Max Tx Power for transmitter-$l$ ($ p_{l}^{\rm max}$) & 0.25 W\\
\hline
Time Frame duration ($T$) & 1 time unit\\
\hline
\end{tabular}
\end{table}

\subsection{Single link analysis: geometrical interpretation of the optimal mode selection policy} \label{Res: sinlge link}
We begin by developing a geometrical interpretation of the optimal mode selection policy for the single link case, under the assumption that the channel gains follow a conventional path loss model $G_{ij}=G_0 D_{ij}^{-\alpha}$,  where $D_{ij}$ is the physical distance between Tx-$i$ and Rx-$j$, $G_{0}$ is the path gain at a reference distance of 1m, and $\alpha $ is the path-loss exponent.
Recall that D2D communication is preferable when $\bar{E}^{\rm D2D}_{ll}(T) \leq E^{\rm CELL}_{ll}(t_{\rm{ul}}^{\star})$.

Let us first study the user-energy objective, in which case $E^{\rm D2D}_{ll}(T)$ and $E^{\rm CELL}_{ll}(t_{\rm ul}^{\star})$ are given by (\ref{E_user}) and (\ref{E_D2D}), respectively. Using the path-loss model, we can transform the mode selection policy into the following equivalent condition in terms of the distances between the transmitter, receiver and BS:
\begin{align}
	D_{ll} &\leq \underbrace{\left( \frac{(e^{b_l/WT}-1)T}
		{(e^{b_l/Wt_{\rm ul}^{\star}}-1) t_{\rm ul}^{\star}}\right)^{-1/\alpha}}_{\kappa(D_{0l})}D_{l0} = \kappa(D_{0l}) D_{l0}. \label{eqn:geometrical_interpretation}
\end{align}
Note that $\kappa$ depends on $D_{0l}$, since $D_{0l}$ affects $r_{0l}^{\max}$ and thereby $t_{\rm ul}^{\star}=T-b_l/r_{0l}^{\max}$. Thus, even though we are neglecting the energy cost for the DL transmission, $D_{0l}$ still influences the optimal mode selection.
To characterize the region where D2D mode is preferable, we fix the position of Tx-$l$ (and therefore $D_{l0}$). We then vary the position of Rx-$l$ along a circle centred at the BS, thus keeping $D_{0l}$ and $\kappa(D_{0l})$ constant.
Inequality~\eqref{eqn:geometrical_interpretation} now states that D2D mode is preferable when the distance between the transmitter and receiver of pair-$l$ is less than $\kappa(D_{0l})D_{l0}$. 
In other words, D2D mode is more energy efficient when Tx-$l$ is located in the arc defined by the intersection of the circle of radius $D_{0l}$ centered at the BS, and the disc of radius $\kappa(D_{0l})D_{l0}$ centered at Tx-$l$.
The D2D optimal area can be constructed by tracing out these arcs for various distances between Rx-$l$ and the BS, see Fig.~\ref{Res: SingleLink}. One can show that $\kappa(D_{0l})$, and thus the D2D optimal area, decreases as Rx-$l$ gets closer to the BS.
	
	Fig.~\ref{Res: SingleLink UE} illustrates the D2D-optimal area in red and the D2D power-feasible are in light blue. To ensure that Assumption~\ref{Ass1} is satisfied, we set $b_l$ equal to the maximum traffic rate that can be supported when Rx-$l$ is on the cell boundary.
	
	 Although (\ref{eqn:geometrical_interpretation}) does not formally describe a disc around Tx-$l$, the D2D-optimal area is close to circular. The reason for this is the power imbalance between the user equipment and the BS, which makes $b_l/r_{0l}^{\max}$ very small,  $t_{\rm ul}^{\star}\approx T$ and $\kappa\approx 1$ practically independently of $D_{0l}$. 

\begin{figure}[h]
\centering
\includegraphics[scale=0.25]{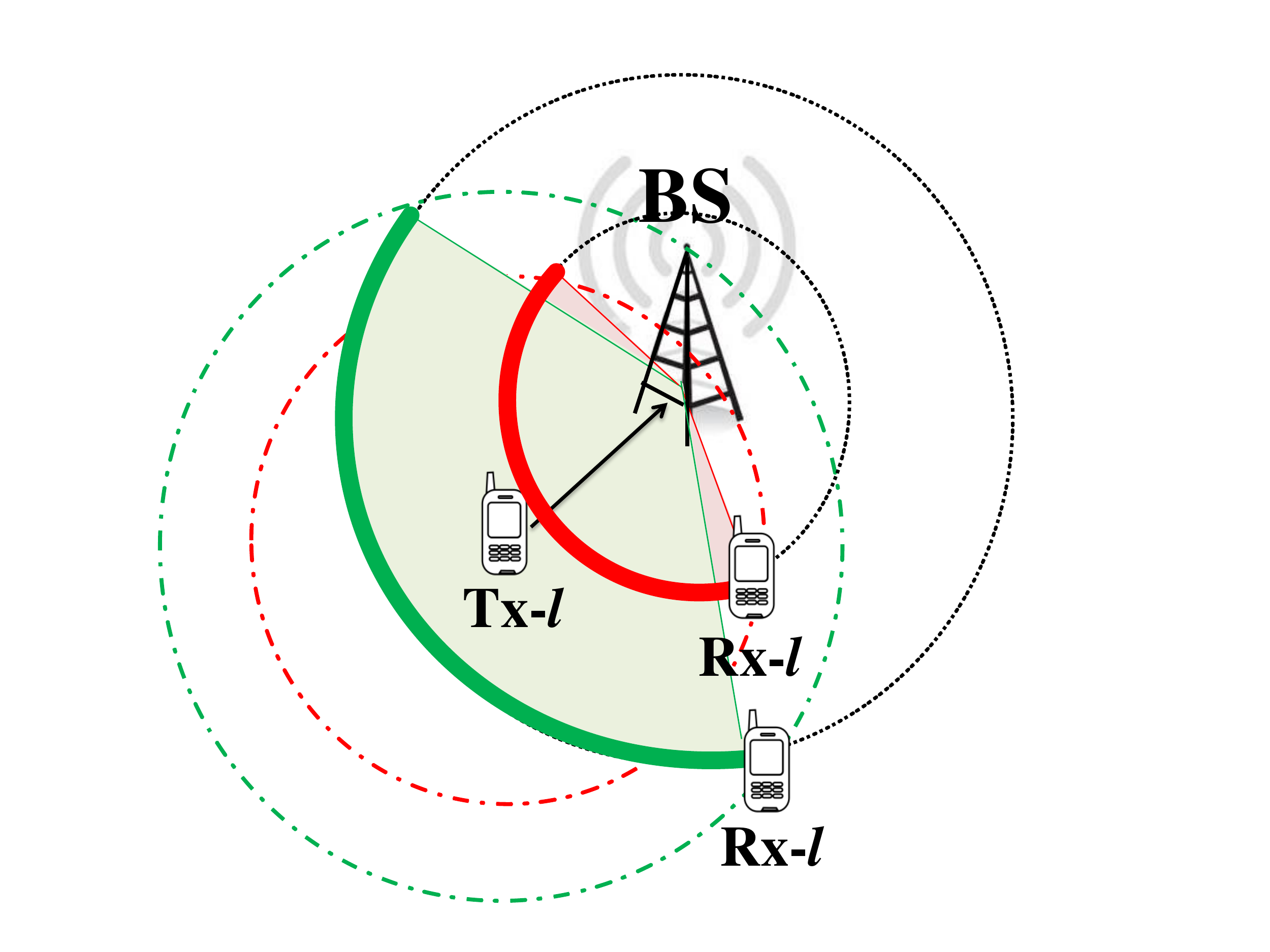}
\caption{Dashed circles centred at the position of Tx-$l$ have radius $\kappa(D_{0l})D_{l0}$ from \eqref{eqn:geometrical_interpretation}, and represent, for each of the two positions of Rx-$l$, the area within which D2D mode is more energy efficient than cellular mode for the user energy minimization.}\label{Res: SingleLink}
\end{figure}

Similar calculations and arguments can be made for the system-energy (SE) objective. In this case,  $E_{ll}^{\rm CELL}(t_{\rm ul}^{\star})$ is given by (\ref{E_network}) and the expression for $\kappa(D_{0l})$ gets a bit more involved~\cite{PendaICC2015}. 
Fig.~\ref{Res: SingleLink SE} shows representative results for our simulation scenario. We observe that the D2D-optimal area is no longer circular and that the D2D mode is preferable in a large portion of the cell.

\begin{figure}[http]
     \begin{center}
        \subfigure[Tx-$l$ is 250m away from the BS.]{%
            \label{Res2_a}
            \includegraphics[width=50mm]{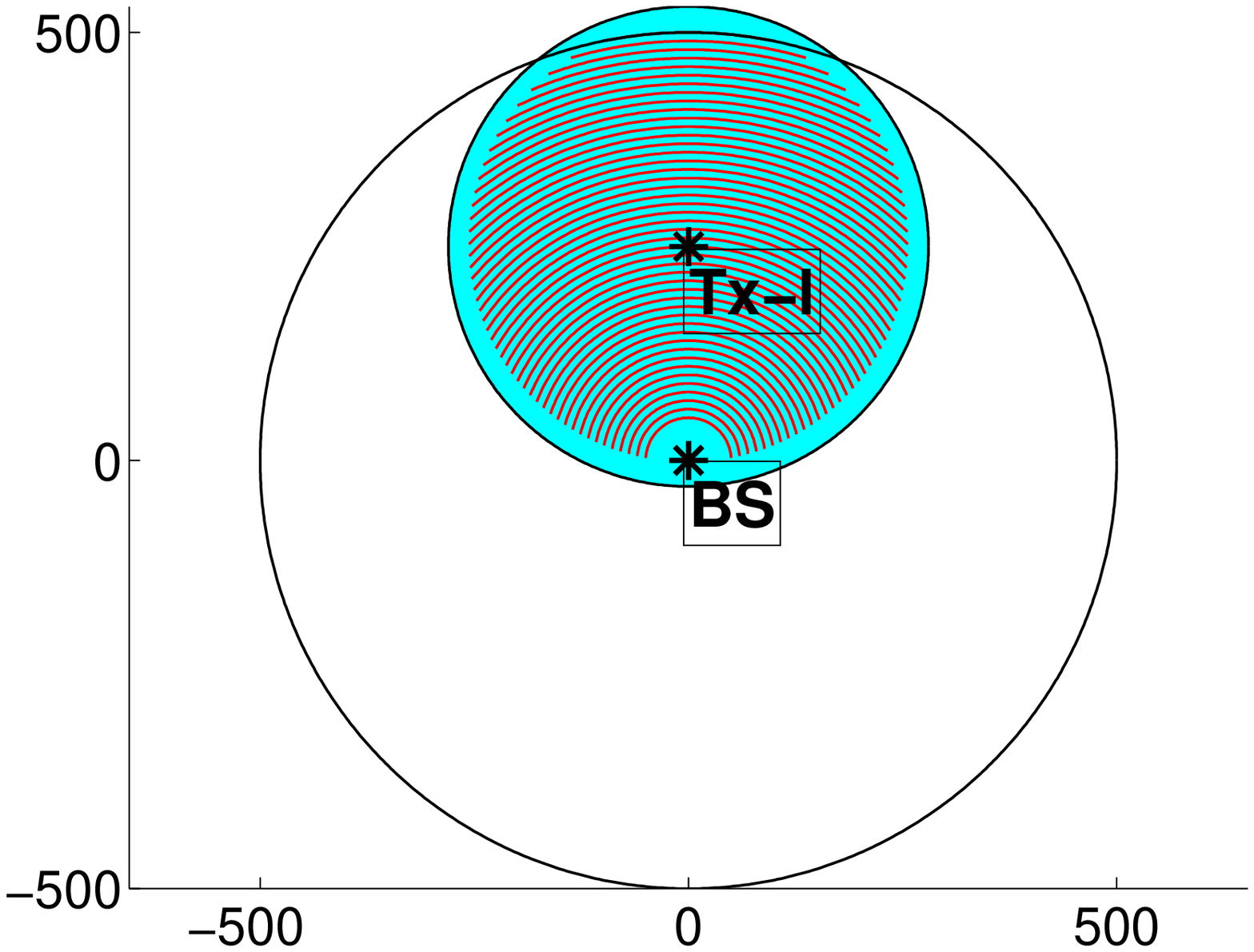}
        }
        \subfigure[Tx-$l$ is 450m away from the BS.]{%
            \label{Res2_b}
            \includegraphics[width=50mm]{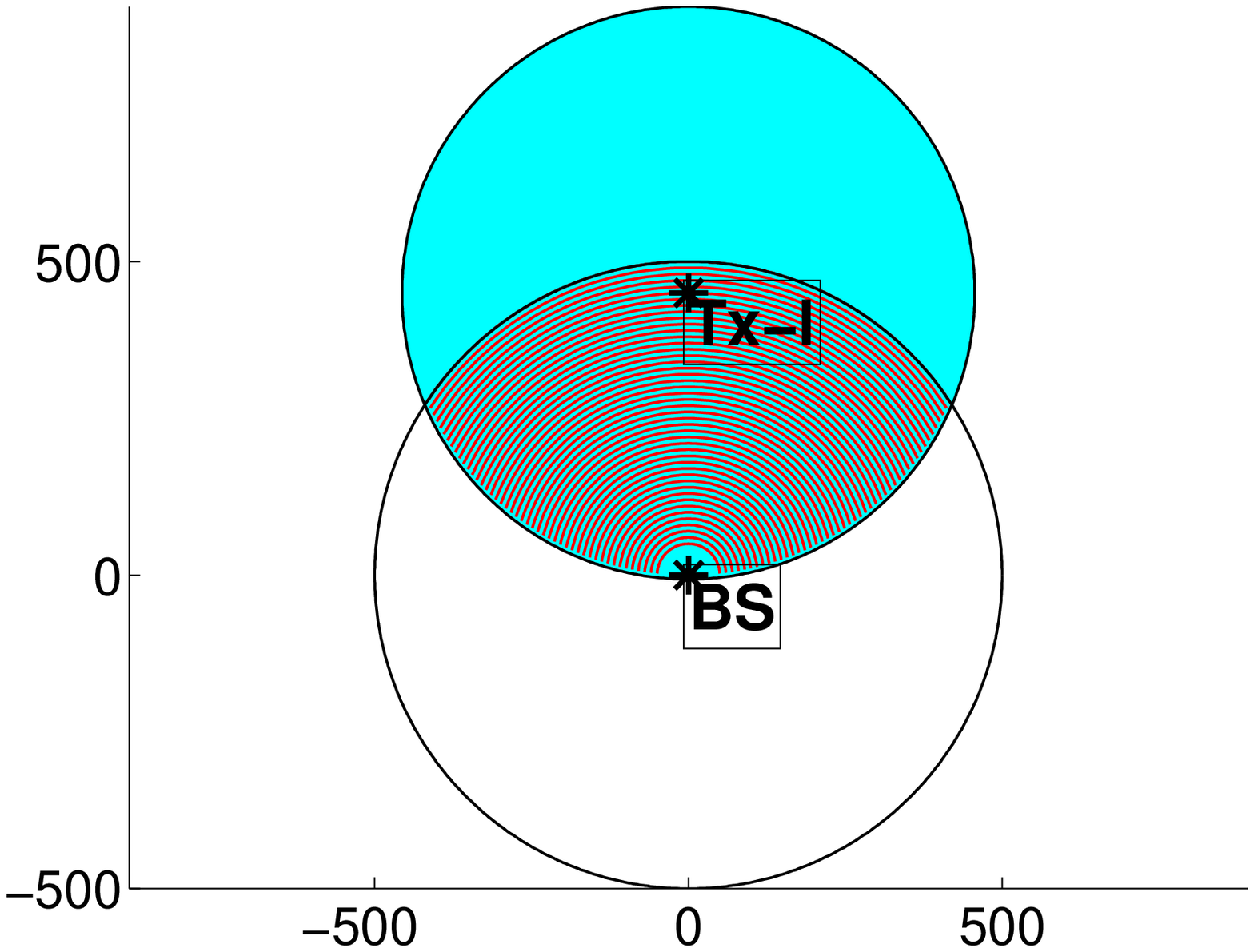}
        }
    \end{center}
    \caption{D2D optimality area when minimizing the mobile user energy consumption.  
    Red area represents the positions of Rx-$l$ for which D2D mode is more energy efficient than cellular mode, while light blue disk represents the area within which Tx-$l$ can fulfil the rate requirement transmitting in D2D mode with a feasible power level.}%
   \label{Res: SingleLink UE}
\end{figure}

\begin{figure}[http]
     \begin{center}
        \subfigure[Tx-$l$ is 250m away from the BS.]{%
            \label{Res3_a}
            \includegraphics[width=50mm]{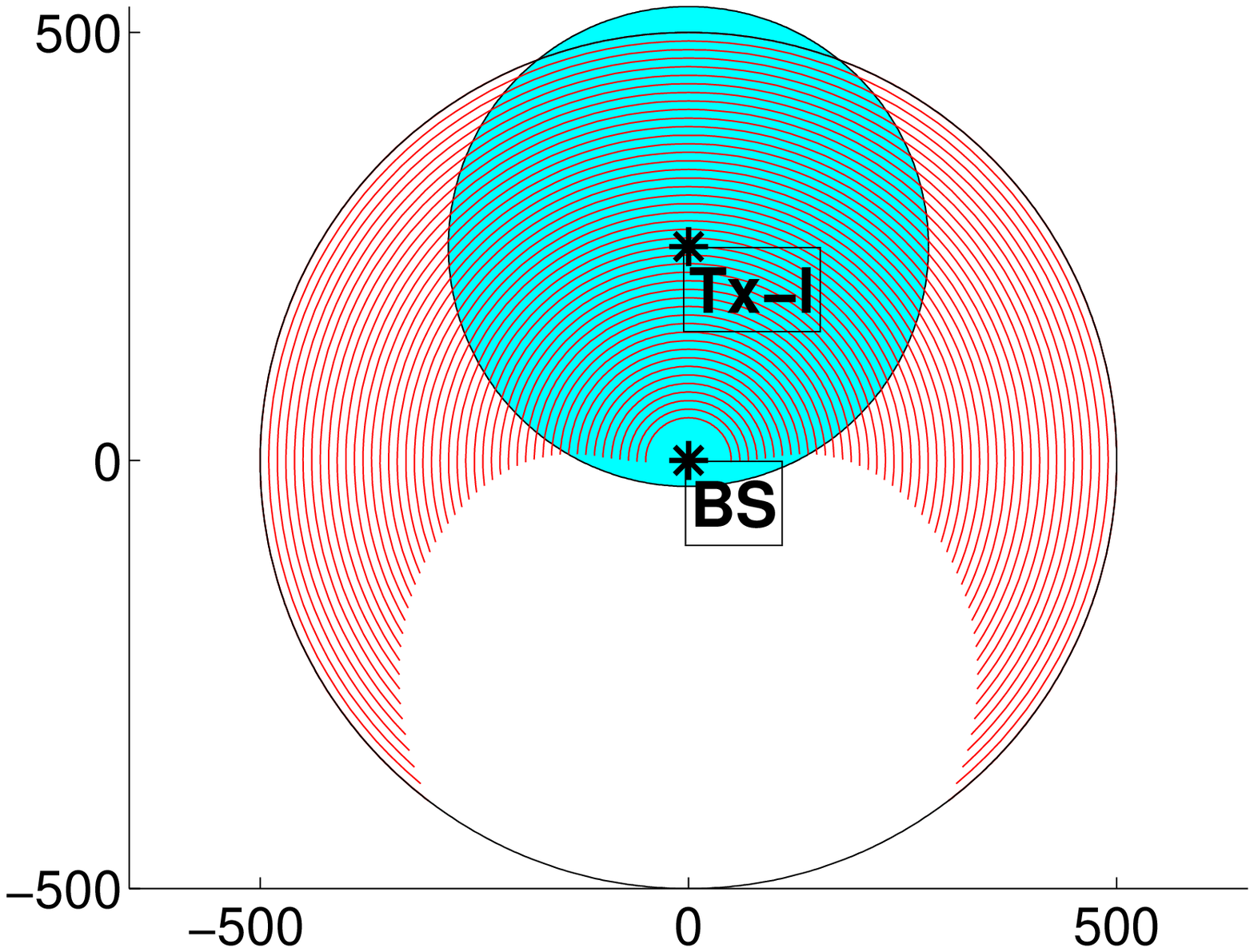}
        }
        \subfigure[Tx-$l$ is 450m aeay from the BS.]{%
            \label{Res3_b}
            \includegraphics[width=50mm]{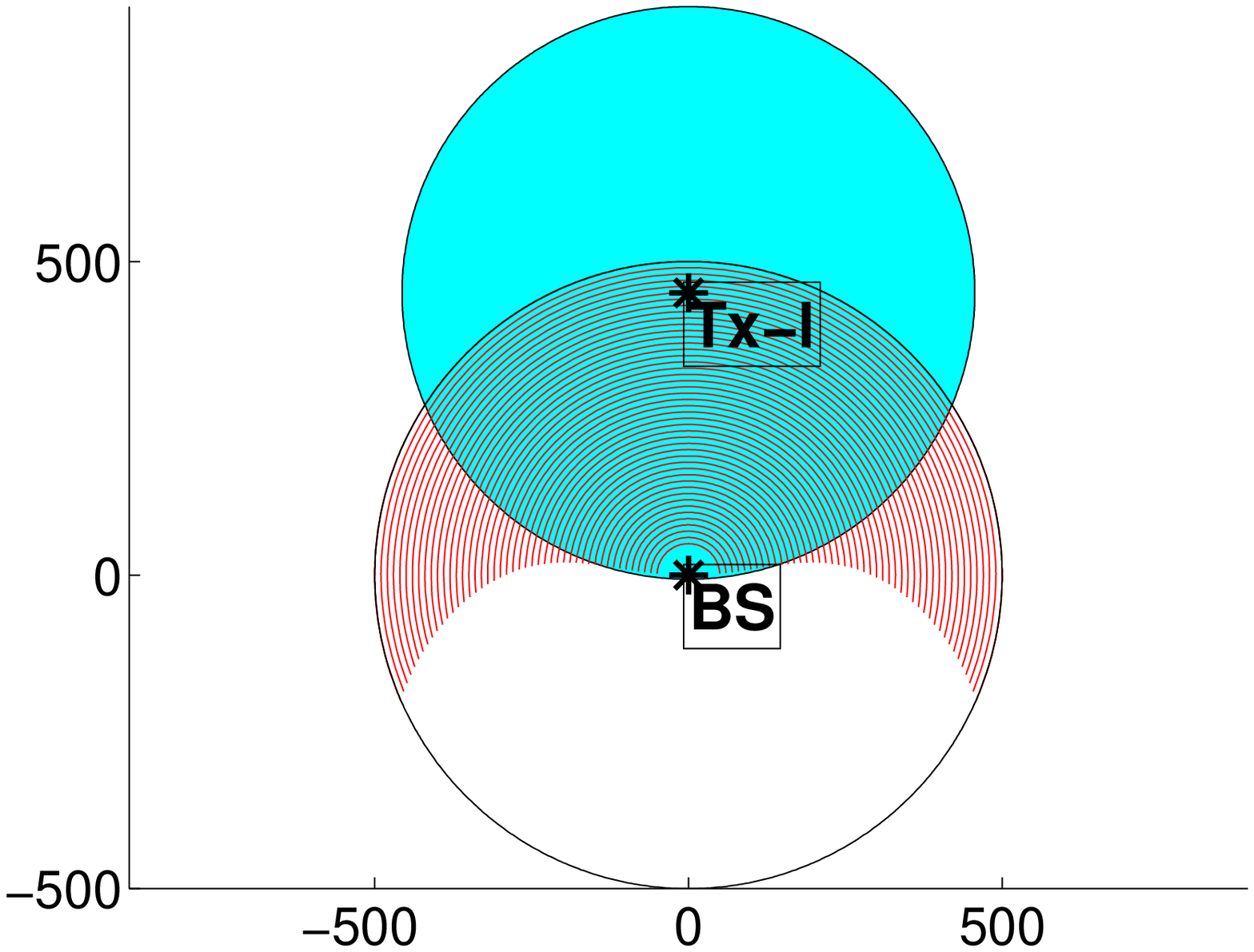}
        }
    \end{center}
    \caption{D2D optimality area when minimizing the system energy consumption. 
    Red area represents the positions of Rx-$l$ for which D2D mode is more energy efficient than cellular mode, while light blue disk represents the area within which Tx-$l$ can fulfil the rate requirement transmitting in D2D mode with a feasible power level.}%
   \label{Res: SingleLink SE}
\end{figure}

\subsection{Multiple link analysis}
The simulation experiments for the multi-link case are set up as follows. We generate random network topologies with given number of user pairs. Transmitters and receivers are randomly placed within the cell area. An example of network with 10 user pairs is given in Fig.~\ref{Res: Example}, where red squares represent the transmitters and green circles represent the receivers, transmitter and receiver forming a pair are labelled with the same number.
We assume large-scale path loss model, where gains are computed as described in Section~\ref{Res: sinlge link}.
Without loss of generality, we further assume the same max transmission power level for all mobile transmitters, and the same traffic requirement for all pairs, indicated with $b$.
To ensure Assumption~\ref{Ass1}, we set $b = \frac{r_{\rm ul}^{\max} r_{\rm dl}^{\max}}{r_{\rm ul}^{\max}+r_{\rm dl}^{\max}} T$, where $r_{\rm ul}^{\max}$ and $r_{\rm dl}^{\max} $ are the maximum achievable rate in UL and DL, respectively, when transmitter and receiver are both at the cell edge. 
For a given number of user pairs, we investigate 1000 random networks and present the averaged results.

\begin{figure}[h]
\begin{center}
\includegraphics[scale=0.35]{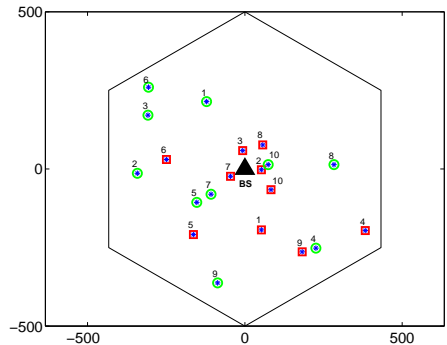}
\end{center}
\caption{Network with 10 user pairs randomly placed in a cell of radius 500m. The red squares represent the transmitters and the green circles represent the receivers. Transmitter and
receiver forming a pair are labelled with the same number.}\label{Res: Example}
\end{figure}

\subsubsection{Energy gain by enabling D2D communications in a fully orthogonal system}
To quantify the energy savings that can be obtained by exploiting direct communications, we compare the energy cost of the optimal FO-UE solution with the total user energy when all pairs are forced to communicate in cellular mode. For each of the 1000 random configurations used in our Monte Carlo study, we sort the links in order of increasing energy gain. Fig.~\ref{Res: Cell_vs_MS} shows the averaged results for networks with 10 and 30 user pairs, respectively.
%
\begin{figure}[h]
     \begin{center}
        \subfigure[Networks with 10 user pairs]{
            \includegraphics[scale=0.3]{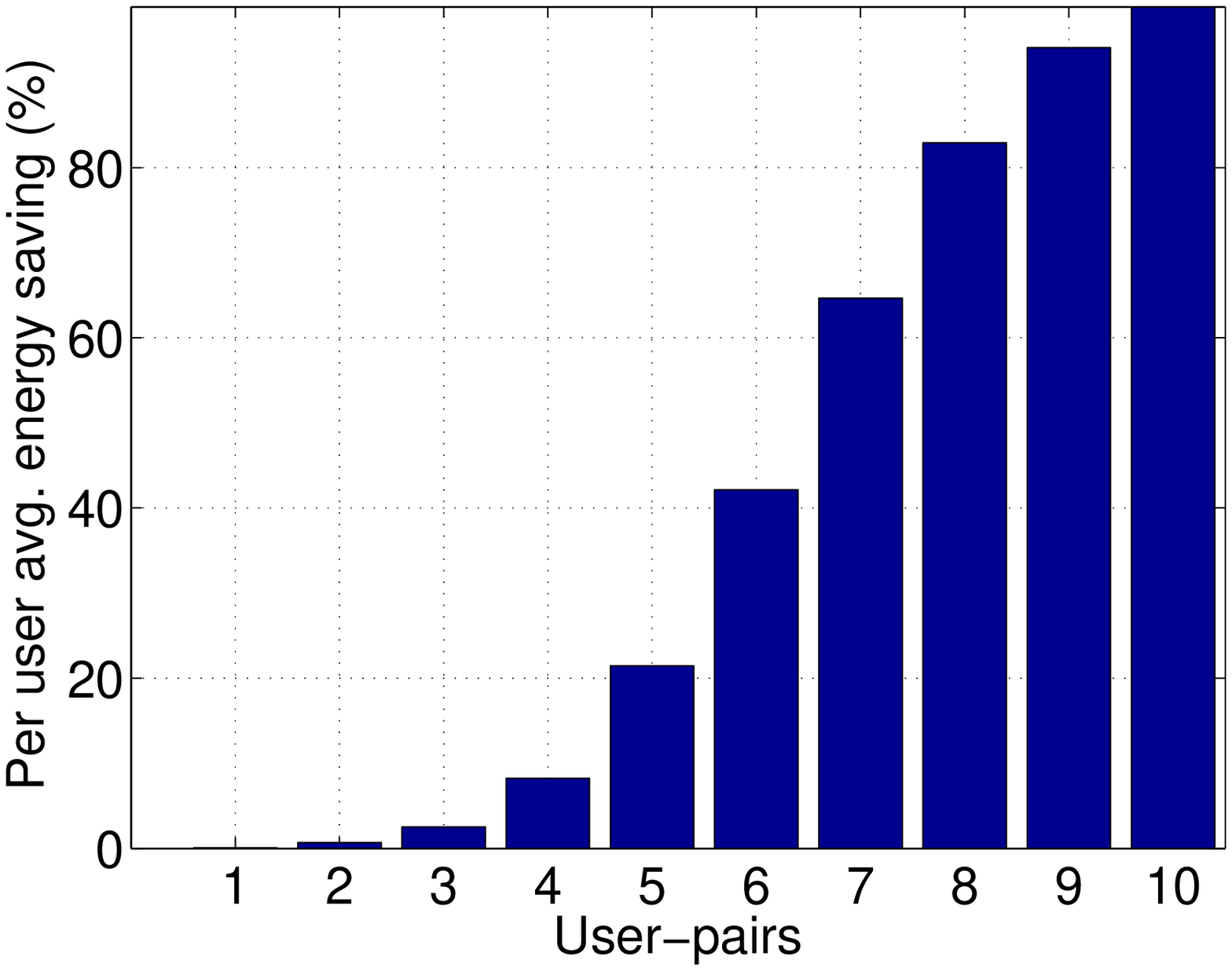}
        }
        \subfigure[Networks with 30 user pairs]{
           \includegraphics[scale=0.3]{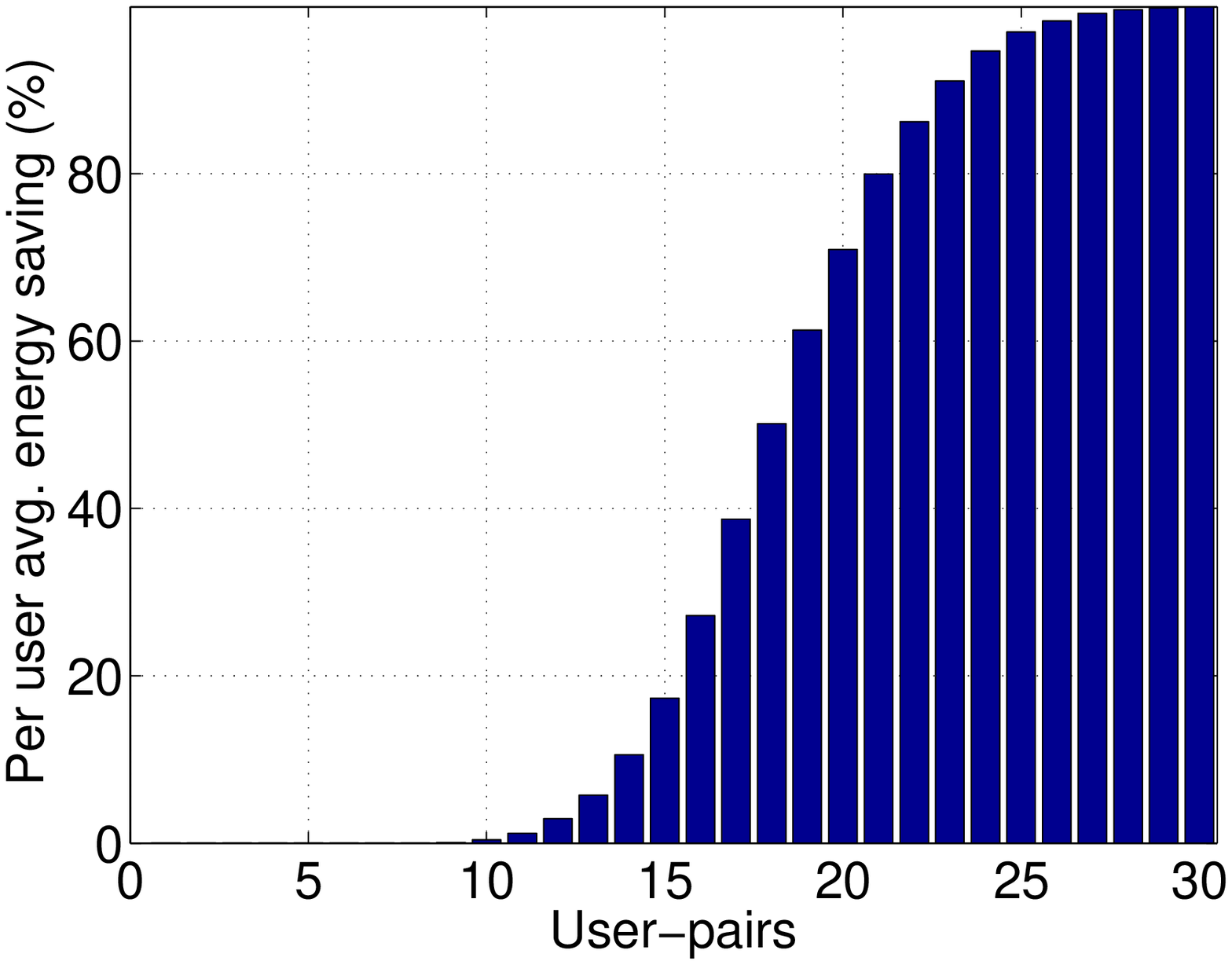}
        }
    \end{center}
\caption{Energy gain by enabling D2D communications in a fully orthogonal system. User pairs are sorted in increasing order of the energy gain they achieve by perfoeming mode selection, compared with traditional communication via the BS.}\label{Res: Cell_vs_MS}
\end{figure}

We observe that when D2D communication is enabled, each user either consumes the same amount of energy as in the traditional cellular mode or reduces its consumption (and sometimes significantly so). Indeed, Fig.~\ref{Res: Cell_vs_MS} demonstrates that, on average, half of the transmitters in the cell achieve an energy gain larger than $20\%$, and one third of the pairs have an energy gain larger than $60\%$. The average energy saving per user is around $40\%$.

It is important to notice that the energy gain is not only a consequence of the proximity of users, but also stems from the fact that D2D connections can use the full frame. Thus, it is the combination of D2D and flexible TDD technologies that contribute to the total energy savings.

\subsubsection{Performance evaluation of the B\&B algorithm for RS-UE}\label{Sec: results_B&B}
The difficulty in solving Problem~\eqref{PB_ML_I} lies mainly in the possible large search space of integer solutions. 
Table~\ref{Table2} shows the average number of mode selection vectors explored by different strategies before achieving the optimal solution. 
We consider the naive exhaustive enumeration algorithm, where we only eliminate infeasible solutions using Proposition~\ref{Prop: InfeasibleSubSet},  and the B\&B algorithm  described in Section~\ref{opt_sol}, both with the proposed branching rule and with a random selection of the branching variable. Averaged results show that the braching rules has a strong effect on the run-time.

\begin{table}[ht]
\centering
\begin{center}
\caption{Avg. number of explored integer solutions}
\label{Table2}
\end{center}
\begin{tabular}[h]{|l|c|c|}
\hline
\emph{\textbf{Algorithm}}     &   \emph{\textbf{10 user pairs}}   & \emph{\textbf{15 user pairs}} \\ 
\hline
Exhaustive enumeration  & 472.975 & $7.46\cdot10^3$\\
\hline
B\&B -  Random branching rule & 69.472 & 251.21\\
\hline
B\&B -  Proposed branching rule & 25.57 & 54.72 \\
\hline
\end{tabular}
\end{table}

\subsubsection{Performance evaluation of the heuristic mode selection algorithm for RS-UE}
In this section, we evaluate the performance of the heuristic mode selection policy for the RS-UE problem.  Fig.~\ref{Res: hist} shows the additional energy cost of the heuristic relative to the optimal solution computed using the B\&B solver. For networks with 10 user pairs,  the heuristic is within 10\% of the optimal solution for almost all network configurations (Fig.~\ref{Res: hist_a}).
For networks with 30 user pairs, the heuristic performs slightly worse. This performance degradation is due to the larger degree of freedom in packing D2D links on the same frequency channel. However, it still remains smaller than $10\%$ for most of the configurations. 

In both plots in Fig.~\ref{Res: hist}, there are a few rare network configurations where the heuristic performs much worse than optimal algorithm.
Fig.~\ref{Res: Example} is an example of such a scenario.
Under full orthogonality,  both pair 9 and pair 4 in Fig.~\ref{Res: Example} would be assigned to D2D mode.
When the heuristic initially attempts to assign these links to the same channel, it encounters an infeasible configuration due to the high interference that Rx-4  perceives from Tx-9. Therefore, only one of the two pairs can be assigned to D2D mode. The optimal decision is to let pair $4$ in D2D mode, but during the first iterations of the  power control in the heuristic, the interference from Rx-9 on Tx-4 leads pair-4 to leave the shared channel and switch to cellular mode. 

\begin{figure}[h]
     \begin{center}
        \subfigure[Networks with 10 user pairs.]{\label{Res: hist_a}
            \includegraphics[scale=0.3]{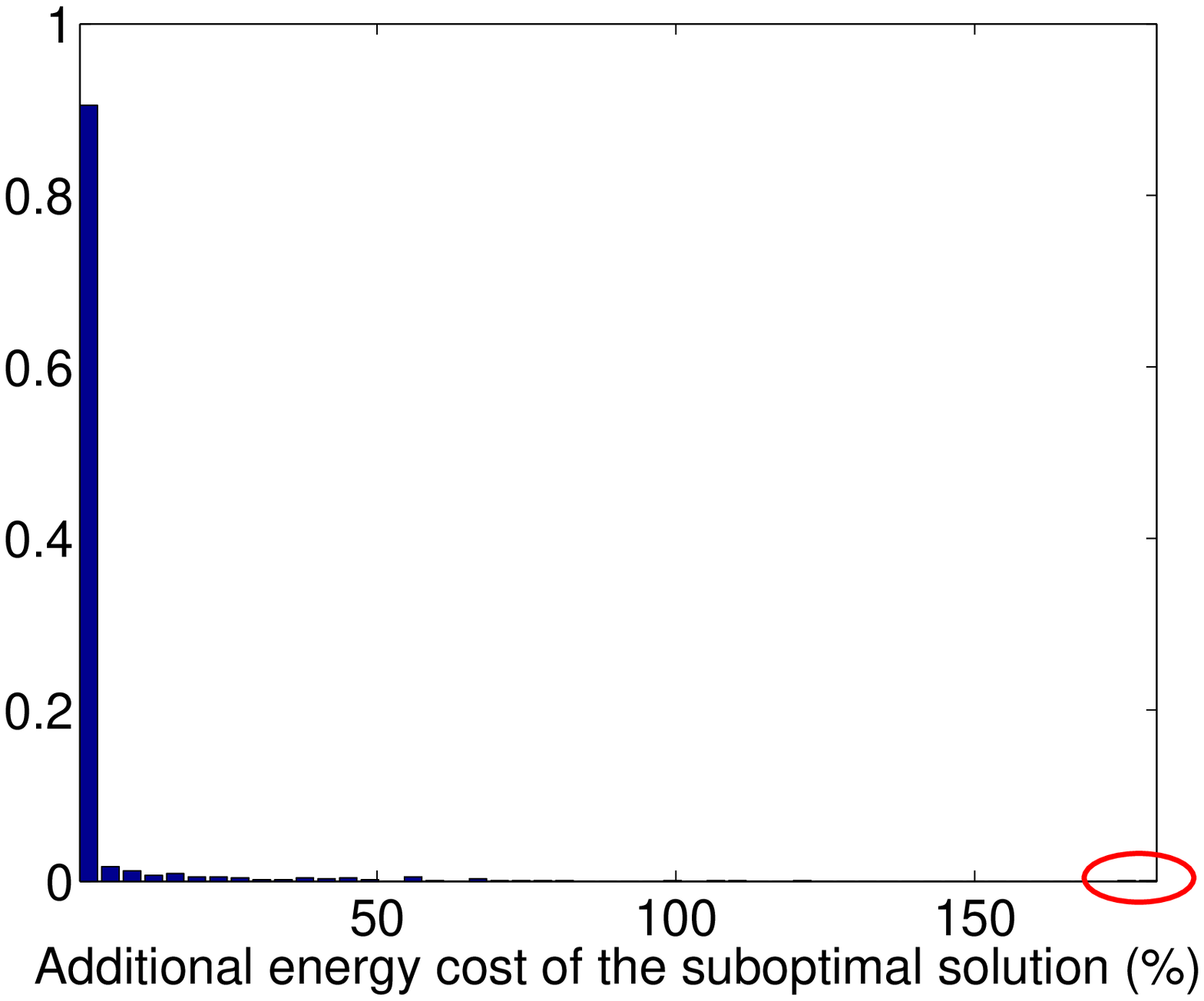}
        }
        \subfigure[Networks with 30 user pairs.]{\label{Res: hist_b}
           \includegraphics[scale=0.3]{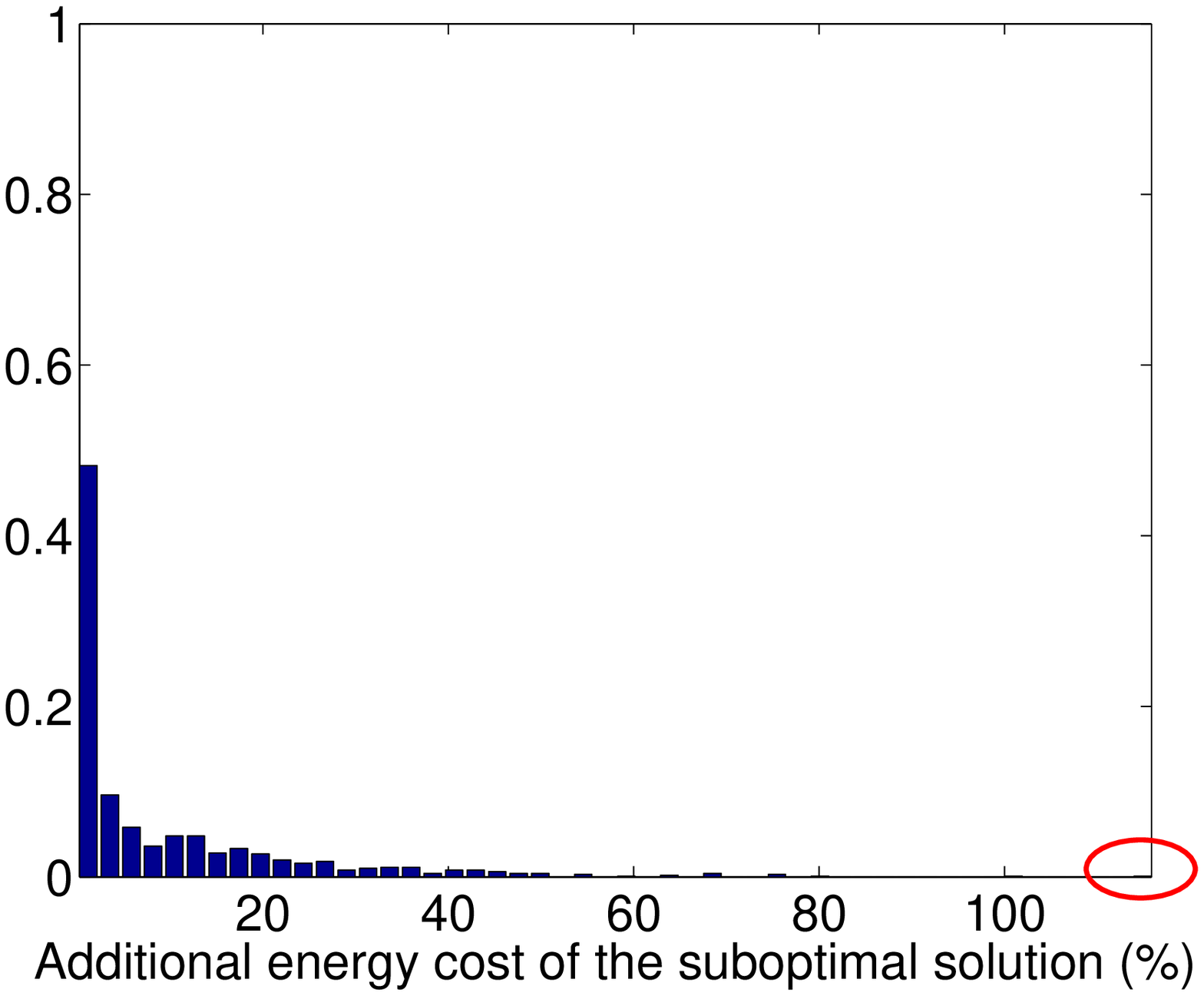}
        }
    \end{center}
\caption{Performance evaluation of the heuristic mode selection algorithm for RS-UE. Additional energy cost of the heuristic relative to the optimal solution achieved with B\&B .}\label{Res: hist}
\end{figure}

\subsubsection{Performance comparison of different mode selection policies}
We have developed a number of optimal and heuristic mode selection policies. 
The aim of this section is to compare these different approaches and to bring additional insight in how they perform. 

Clearly, allowing for resource sharing reduces the number of orthogonal channels required, but increases the energy cost due to the interference among D2D links. Moreover, the heuristic allows to penalize the use of channel resources by tuning the threshold value $\theta$.
Fig.~\ref{Res: MS_comparison} shows the energy-channel performance for the different proposed algorithms in simulations of networks with 10 and 30 user pairs, respectively.
We note that  FO-UE represents the energy-optimal solution, using the same amount of frequency channels as the cellular mode. The RS scenario, in which all D2D pairs share the same frequency resource, uses fewer channels but a slightly higher energy cost due to the interference between D2D users.
The small increase in energy consumption can be understood by noting that the transmission powers assigned to D2D pairs are generally very small, partly because D2D pairs typically have high direct gains (since the transmitter and receiver often are in close proximity of each other), and partly since D2D connections can use the full frame. These results demonstrate that  D2D communications in cellular networks have the potential to improve both spectrum and energy efficiency over a traditional cellular solution. 

Finally, we evaluate the performance of the heuristic method for different values of the threshold $\theta$.
As expected, large values of $\theta$ decrease the number of channels used at the expense of a slightly increased energy consumption. For this reason, $\theta$ is an important design parameter for finding a suitable trade-off between energy consumption and channel use.

\begin{figure}[h]
     \begin{center}
        \subfigure[Networks with 10 user pairs.]{
            \includegraphics[scale=0.3]{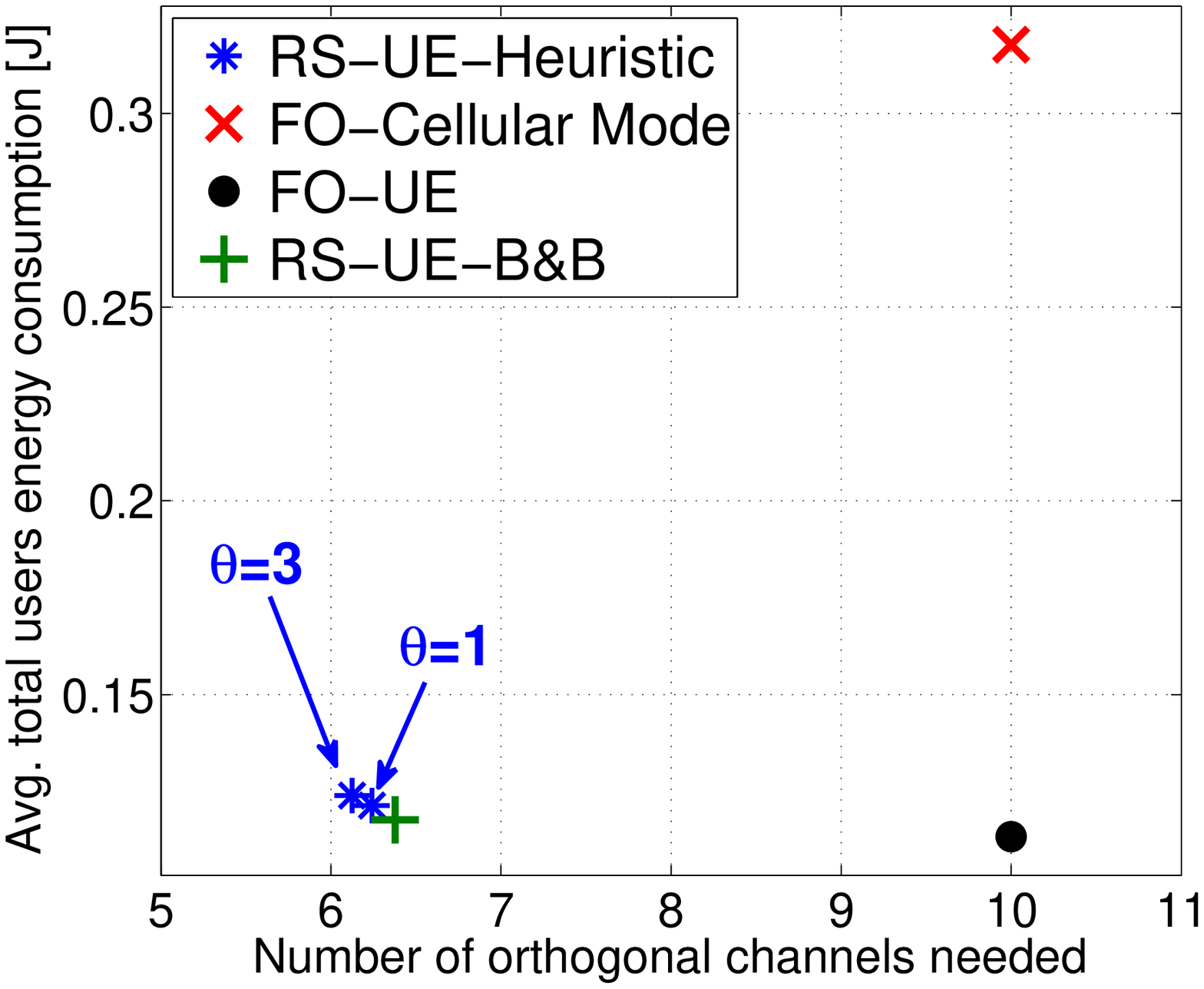}
        }
        \subfigure[Networks with 30 user pairs.]{
           \includegraphics[scale=0.3]{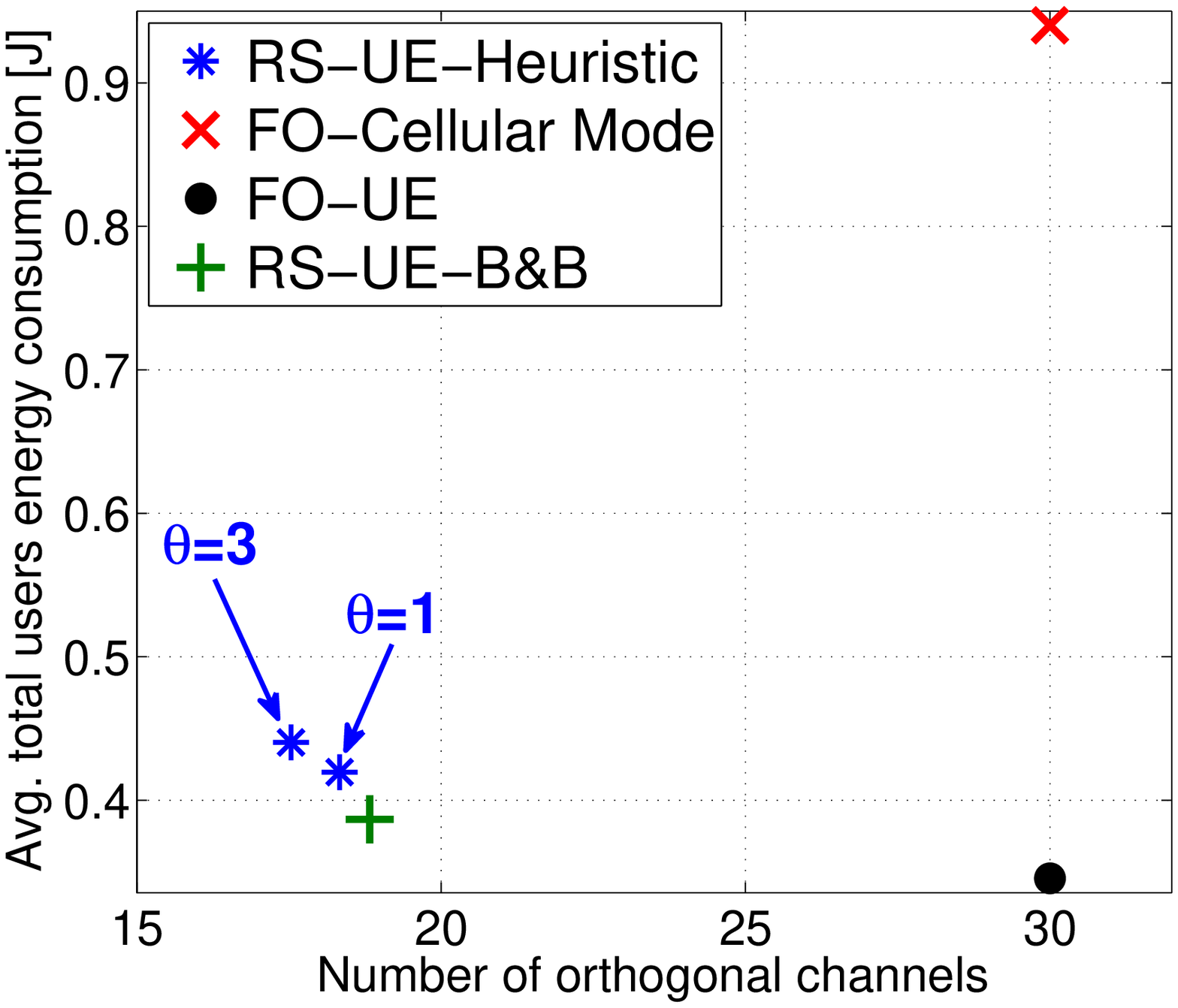}
        }
    \end{center}
\caption{Performance evaluation for different algorithms. Total energy consumption and number of orthogonal frequency channels needed to accomodate all the communication requests within the cell. Values are averaged over 1000 random configurations.} \label{Res: MS_comparison}
\end{figure}

\section{Conclusions} \label{Sec: conclusion}
We investigated the problem of energy efficient mode selection and resource allocation for network-assisted D2D communications in a dynamic TDD system.
The problem has been analysed under two frequency channel allocation strategies (with and without interference among D2D pairs) and with two objectives (total user energy and total system energy). For each configuration we derived the optimal solution to the corresponding MINLP formulation. For the full orthogonality scenario, we demonstrated how the optimal solution could be obtained in polynomial (and sometimes even linear) time. When D2D pairs interfere with each other, on the other hand, we found it much harder to compute the optimal solution in an efficient manner. A customized branch-and-bound solver was therefore complemented by a more practical low-complexity heuristic. Through numerical simulations, we found that network-assisted D2D communications can yield significant energy savings, and that the heuristic algorithm could find near-optimal solutions while respecting practical implementation constraints.

\section*{Acknowledgment}
The authors thank Dr. Gabor Fodor, Dr. Themistoklis Charalambous and Dr. Euhanna Ghadimi,
whose comments helped to improve the presentation and the contents of the paper.

\ifCLASSOPTIONcaptionsoff
  \newpage
\fi

\bibliographystyle{IEEEtran}
\bibliography{EnergyEfficientD2D}

\end{document}